\documentclass[conference]{IEEEtran}
\IEEEoverridecommandlockouts

\usepackage{fancyhdr}
\usepackage{booktabs} 
\usepackage{amssymb}
\usepackage{amsmath,amsthm}
\usepackage{graphicx}
\usepackage[tight,footnotesize]{subfigure}
\usepackage{multirow}
\usepackage{listings}
\usepackage{color}
\usepackage{caption}
\usepackage{algorithm}
\usepackage{algorithmicx}
\usepackage[noend]{algpseudocode}
\usepackage{xspace}
\usepackage{url}
\usepackage{breakurl}

\usepackage{flushend}

\newcommand{\para}[1]{\xspace \smallskip \noindent\textbf{#1} \ }

\makeatletter
\newcounter{protocol}

\makeatother

\lstdefinestyle{mystyle}{
	keywordstyle=\color{magenta},
	basicstyle=\ttfamily\footnotesize,
	breakatwhitespace=false,         
	breaklines=true,                 
	captionpos=b,                    
	keepspaces=true,                 
	numbers=left,                    
	numbersep=5pt,                  
	showspaces=false,                
	showstringspaces=false,
	showtabs=flase,                  
	tabsize=2,
}

\lstset{style=mystyle}

\newtheorem{theorem}{Theorem}

\newtheorem{definition}{Definition}

\begin{document}
\title{On Finding Dense Subgraphs in Bipartite Graphs: Linear Algorithms with Applications to Fraud Detection}

\author{\IEEEauthorblockN{ Yikun Ban}
\IEEEauthorblockA{Peking University}
\and

\IEEEauthorblockN{ Yitao Duan}
\IEEEauthorblockA{Netease Youdao Inc.}

}

\maketitle
\vspace{-0.5em}
\begin{abstract}
Detecting dense subgraphs from large graphs is a core component in many applications, ranging from social networks mining, bioinformatics, to online fraud detection. In this paper, we focus on mining dense subgraphs in a bipartite graph. The work is motivated by the task of fraud detection that can often be formulated as mining a bipartite graph formed by the source nodes (followers, customers) and target nodes (followees, products, etc.) for malicious patterns. We introduce a new restricted biclique problem,  Maximal Half Isolated Biclique (MHI Biclique), and show that the problem finds immediate applications in fraud detection. We prove that, unlike many other biclique problems such as the maximum edge biclique problem that are known to be NP-Complete, the MHI Biclique problem admits a \emph{linear} time solution. We provide a novel algorithm S-tree, and its extension, S-forest, that solves the problem efficiently. We also demonstrate that the algorithms are robust against deliberate camouflaging and other perturbations. Furthermore, our approach can automatically combine and prioritize multiple features, reducing the need for feature engineering while maintaining security against unseen attacks. Extensive experiments on several public and proprietary datasets demonstrate that S-tree/S-forest outperforms strong rivals across all configurations, becoming the new state of the art in fraud detection.

\end{abstract}

\begin{IEEEkeywords}
Biclique, Dense Subgraph, Fraud Detection
\end{IEEEkeywords}

\vspace{-0.5em}
\section{Introduction}
\vspace{-0.5em}

Dense subgraph detection is a major topic in algorithmic graph theory and their applications can be found in a wide range of real-world scenarios such as social network analysis \cite{Flake2000Efficient}, link spam detection  \cite{MZOOM}, and bioinformatics \cite{GENE}, etc. In many graphs that model interactions among users (e.g., a social network) or between users and a platform (e.g., an e-commerce site), dense subgraphs tend to signal interesting phenomena or indicate a group of accomplices.   

In this paper, we focus on mining dense subgraphs in a bipartite graph. The work is motivated by the task of fraud detection that can often be formulated as mining a bipartite graph formed by source nodes (followers, customers) and target nodes (followees, products, etc.) for malicious patterns. Fraud is a serious problem in our society, especially in online social applications (OSA) such as Amazon, Twitter, Weibo, etc. The interactions between users and the platform can be modeled as a bipartite graph $G$ between source nodes (accounts) and target nodes (e.g., followee, products, etc.). In a common scenario, fraudsters try to manipulate the ranking of some specific objects on OSA by creating fake edges (follows, reviews, etc.) from the accounts they control. Since fraudsters are bounded in resources, and they are trying to maximize financial gains, fraud groups (fake accounts and their collusive customers) inevitably exhibit synchronized behavior, forming dense subgraphs in $G$.

Fraud detection is an area of active research. Many works try to detect such dense subgraphs using belief propagation(BP)\cite{FRAUDEAGLE, NETPROBE}, HITS\cite{HITS}-like ideas \cite{COMBATING, CATCHSYNC}, singular value decomposition (SVD)\cite{SPOKEN,FBOX}, or greedy pruning \cite{FRAUDAR,CROSSSPOT,MZOOM,DCUBE}. In its simplest forms, fraud groups are likely to form bicliques (a complete bipartite graph) in $G$. Unfortunately, finding all maximal bicliques cannot be solved in polynomial time. This is further complicated by fraudsters trying to disguise themselves as normal users, a tactic known as ``camouflaging''. This is typically done by adding edges from fraudulent source nodes to legit target nodes. Existing approaches are either inefficient or ineffective in the presence of camouflage.

Another challenge in fraud detection is that fraud groups display synchronized behavior only on certain dimensions (e.g., IP address, phones, etc.), which are unknown in advance and may change across groups and time. Treating all features equally will be misled by irrelevant features. Effective fraud detection often requires heavy feature engineering.

\para{Our Contributions} We formulate fraud detection as a restricted case of biclique mining and give it a rigorous treatment. We observe that most fraud subgraphs are \emph{half isolated}, meaning that they connect to other nodes only through at most one side. For example, it is easy for fraudsters to create edges from the accounts they control to other nodes (including legit nodes), but is rare for fraud nodes to gain edges from legit users. We introduce the 
Maximal Half Isolated (MHI) Biclique problem which admits a linear time solution. By utilizing an efficient and scalable algorithm, the model can not only run faster, handle more data, but also detect all MHI biclques and frauds more accurately. This is in contrast to other solutions relying on heuristics or approximation. Concretely, our contributions are summarized as follows.    

\begin{itemize}
\item \textbf{[A New Graph Problem]}. We introduce a new restricted biclique problem, Maximal Half Isolated Biclique (MHI Biclique), and show that the problem finds immediate applications in fraud detection. 

\item \textbf{[A Linear Algorithm]}. We propose a novel data structure, S-tree, and a mining algorithm, that solves the MHI Biclique problem in linear time. The algorithm is effective for detecting unbalanced dense sub-bipartite graphs. We provide theoretical proofs regarding the algorithm's effectiveness and robustness against adversarial perturbations. 

\item \textbf{[Practical Algorithms for Fraud Detection]}. Based on the S-tree-based MHI Biclique problem solver, we introduce a new algorithm that detects near bicliques and can be used to catch a wide range of fraud groups. The algorithm has several advantages compared with other fraud detection algorithms, summarized in Table \ref{tab.compare}.

\item \textbf{[Automatic Feature Prioritization]}. We further extend S-tree to S-forest to handle multimodal data. S-forest can utilize \emph{both} structural or graph data representing the relations between objects and attribute data characterizing individual objects. Furthermore, S-forest can automatically combine and prioritize multiple features, reducing the need for feature engineering.

\item \textbf{[Effectiveness on Real-world Data]}. We conducted extensive experiments on thirteen real-world datasets, including twelve public datasets and one proprietary data collected from the production system log of a major e-commerce vendor. Our solution outperforms strong rivals across all configurations, becoming the new state of the art in fraud detection.
\end{itemize}

%
%
%


\begin{table} 
\footnotesize 
\caption{S-tree/S-forest v.s. other fraud detection methods}\label{tab.compare}
\vspace{-1em}
\centering
\begin{tabular}{c|cccccc|c}
\toprule
  & \rotatebox{90}{Fraudar\cite{FRAUDAR}} &\rotatebox{90}{Spoken \cite{SPOKEN} }&\rotatebox{90}{CatchSync \cite{CATCHSYNC}} &\rotatebox{90}{CrossSpot \cite{CROSSSPOT}}&  \rotatebox{90}{Fbox\cite{FBOX}}&\rotatebox{90}{M-zoom\cite{MZOOM}} &\rotatebox{90}{S-tree/S-forest} \\
\midrule
 Handling multimodel data ?&$\times$&$\times$&$\times$&$\surd $&$\times$&$\surd$ &$\surd$\\
Camouflage resistant?& $\surd$ & $\times$  & $\times$ & ? & $\times$ & ? & $\surd$ \\
Linear time?& $\times$ & $\times$  & $\times$ & $\times$ & $\times$ &$\times$& $\surd$ \\
\bottomrule
\end{tabular}
\centering
\vspace{-2.5em}
\end{table}

\vspace{-0.5em}
\section{Problem Definition}\label{sec:prob}
\vspace{-0.5em}
\subsection{Graph modeling}

Consider a bipartite graph $G = (V, E)$ with vertex set $V$ and edge set $E$ where $V= N \cup M$  consists of two disjoint subsets: a set of \emph{source nodes}  $N = \{n_1,...,n_{|N|}\}$  and a set of \emph{target nodes} $M = \{m_1, ...,m_{|M|}\}$. $(n, m) \in E$ denotes an edge between $n \in N$ and $m \in M$. We formulate the fraud detection problem in two modes. In both cases, $N$ are user accounts, some of which are controlled by fraudsters. The two differ by how they construct $M$. In the first case, elements in $M$ are \emph{objects}, the entities whose ranking the fraudsters try to manipulate. Examples include followees on Twitter, products on Amazon, etc. In the second mode, $M$ consists of \emph{resources} that the users use to interact with the platform (e.g., IP addresses, timestamps, device IDs, etc). This mode typically involves multiple dimensions and we model them using multiple bipartite graphs. We call the first mode account-object bipartite graph (AOBG) and the second account-resource  graph (ARBG).
As will be shown later, fraud activities display remarkable synchrony in both cases and our scheme handles them in a uniform way. For any specific fraud detection problem, practitioners are free to choose either mode or combine both to achieve optimal results.

Our goal is to detect a subset of source nodes $\mathcal{N}$ that are likely to be involved in fraud activities. Table \ref{tab1} gives the list of symbols we use throughout the paper.

\subsection{Fraud Attack}\label{Synchrony}

\begin{table}[t]
\caption{Symbols and Definitions}\label{tab1}
\vspace{-1em}
 \centering
\begin{tabular}{l|l}
\toprule
Symbols &  Definition\\
\midrule
$N$ &  The set of source nodes, $N = \{n_1,...,n_{|N|}\}$\\ 
$M$ &  The set of target nodes, $M = \{m_1, ...,m_{|M|}\}$\\
$G$ & The bipartite graph, $G =(V, E)$, $V = M \cup N$ \\
$\mathcal{N}$ & A subset of $N$\\
$\mathcal{M}$ & A subset of $M$\\
$\mathcal{G}$ & A subgraph of $G$ \\
$B(m)$&  The basket of $m$, $B(m) = \{m, I(m), f(m)\}$ \\
$T$ & The suspiciousness tree (S-tree)\\
$x$ & A node of $T$, which has three fields: \emph{sn}, \emph{sus}, and \emph{tn}\\
\bottomrule
\end{tabular}
\vspace{-2em}
\end{table}

Most frauds are conducted for financial gains. To reduce cost, fraudsters operate on limited resources (e.g., phone numbers, IP addresses, time etc.) and also resort to economies of scale to maximize profits \cite{SYNCHROTRAP}. As a result, fraudulent accounts exhibit unusually synchronized patterns compared to the legit users, forming dense subgraphs on $G$. This has been the key signal that many fraud detection works\cite{CATCHSYNC, COPYCATCH, FRAUDAR, CROSSSPOT} try to capture. Similarly, with multi-dimensional data, fraud activities often form dense regions in tensor models, which has been observed in network intrusion\cite{MZOOM}, bot activities\cite{MZOOM}, and genetic applications\cite{GENE}. In this section, we first introduce a few precise definitions describing the dense  regions phenomena in a bipartite graph $G$ and will use them to analyze various properties of our algorithm.

\begin{definition}[Isolated $\rho$-Synchronized Subgraph $\mathcal{G}(\mathcal{N}, \mathcal{M}, \rho)$]  \label{def.rho}
Given a bipartite graph $G = (V= N \cup M, E)$. Let $\mathcal{N} \subset N$ and $\mathcal{M} \subset M$. The subgraph induced by $(\mathcal{N}, \mathcal{M})$, denoted $\mathcal{G}(\mathcal{N}, \mathcal{M}, \rho)$, is isolated and $\rho$-synchronized if (1) $\forall m \in \mathcal{M}$, if there exists an edge $(c, m) \in E$, it must be that $c \in \mathcal{N}$; (2) $\forall n \in \mathcal{N}$, if there exists an edge $(n, c) \in E$, it must be that $c \in \mathcal{M}$; and (3) $\forall n \in \mathcal{N}, \exists \mathcal{W}  \subseteq \mathcal{M}$ such that $\forall m \in \mathcal{W}$ there exists an edges  $(n, m) \in E$. $\rho$, called the synchrony of the subgraph, is defined as
\vspace{-0.5em}
\begin{displaymath}
\rho = \frac{\overline{| \mathcal{W}|}}{|\mathcal{M}|},
\end{displaymath}
where $\overline{| \mathcal{W}|}$ is the mean for all $|\mathcal{W}|$s. 
\end{definition}
\vspace{-0.5em}

When $\rho = 1$, Definition \ref{def.rho} represents an extremely simple type of fraud pattern: fraudulent accounts and target nodes form a biclique that is disconnected from other parts of the network. Although naive, this is the core pattern that fraud detection algorithms try to capture \cite{CATCHSYNC,SPOKEN,NETPROBE,FRAUDAR}. To avoid detection, fraudsters often ``camouflage'' their activities by introducing additional edges in the graph \cite{FRAUDAR}, making the subgraph both less synchronized ($\rho < 1$ if all connected nodes are included) and not isolated. An effective detection algorithm must be robust against camouflaging and other perturbations.

Camouflaging can be classified into two types:
\vspace{-0.5em}
\begin{definition} [Active Camouflage (A-Cam)] \label{def:acam}
Given $\mathcal{G}(\mathcal{N}, \mathcal{M}, \rho)$, let $\hat{\mathcal{M}}$ be a set of target nodes and $ \hat{\mathcal{M}} \cap \mathcal{M} = \emptyset$. Then we use $\mathcal{G}(\mathcal{N}, \mathcal{M}, \rho)$+ A-Cam to denote the subgraph induced by $(\mathcal{N}, \mathcal{M})$ where $\forall m \in \hat{\mathcal{M}}$, there exists an edge $(n, m) \in E$, $n \in \mathcal{N}$. 
\end{definition}
\vspace{-0.5em}
In practice, fraudsters can easily add edges from $\mathcal{N}$ to $\mathcal{M}$ as well as to target nodes outside $\mathcal{M}$, since they control the fraud source nodes in $\mathcal{N}$ (Definition \ref{def:acam}). This pattern is frequently observed in social networks \cite{FRAUDAR,CATCHSYNC}. 

 \vspace{-0.5em}
 \begin{definition} [Passive Camouflage (P-Cam)] \label{def:pcam}
 Given $\mathcal{G}(\mathcal{N}, \mathcal{M}, \rho)$, let $\hat{\mathcal{N}}$ be a set of source nodes and $ \hat{\mathcal{N}} \cap \mathcal{N} = \emptyset$. Then we use $\mathcal{G}(\mathcal{N}, \mathcal{M}, \rho)$+ P-Cam to denote the subgraph induced by $(\mathcal{N}, \mathcal{M})$ where $\forall n \in \hat{\mathcal{N}}$, there exists an edge $(n, m) \in E$, $m \in \mathcal{M}$. 
 \end{definition}
\vspace{-0.5em}

P-Cam models the situations where some legit users add edges to fraudulent nodes in $\mathcal{M}$ by accident or share some resources (e.g., IP addresses, devices, etc.) with fraud groups. It is not very frequent, since legit users are out of fraudster's control, but is indeed possible. For example, we have observed that on a popular Chinese microblogging website, the system sometimes make users involuntarily follow some accounts. Also, explicitly modeling P-Cam makes it clearer to analyze the problem and algorithms.  

\subsection{A Biclique Problem with Linear Solution}

The definitions introduced earlier are closely related to the concept of biclique in bipartite graphs. A biclique is a complete sub-bipartite graph that contains all permissible edges, which, according to definition \ref{def.rho}, is essentially a subgraph with synchrony $\rho = 1$ (but not necessarily isolated). A biclique is said to be maximal if it is not contained in any other bicliques.  
The \emph{vertex maximum biclique problem} and the \emph{edge maximum biclique problem} are two distinct well-known problems in bipartite graphs. The former can be solved in polynomial time \cite{book1980michael}, while the latter is NP-complete\cite{biclique}.

Finding all \emph{maximal bicliques} in bipartite graphs, the maximal biclique enumeration problem, cannot be solved in polynomial time \cite{eppstein1994arboricity}, because it contains all edge maximum and vertex maximum bicliques. Detecting them would be intractable. Inspired by real-world fraud attack patterns, we introduce a novel, more restricted biclique problem that captures the essence of group synchrony in the fraud activities yet admits a \emph{linear} algorithm. A maximal biclique has no constraint on how it is connected to the other nodes of the graph. Now we introduce the notion of \emph{half isolated biclique} that only allows edges to outside nodes from at most \emph{one} part of the subgraph. Formally,


\vspace{-0.5em}
\begin{definition}  [Half Isolated Biclique (HI Biclique)] \label{def:HI}
In a bipartite graph $G = (N \cup M, E)$, a subgraph  $\mathcal{G} = (\mathcal{N \cup M, E})$ is a half isolated biclique if (1) $\mathcal{G}$ is a biclique $(\forall n \in \mathcal{N}, m \in \mathcal{M}, \ (n, m) \in E)$; (2) \{ $\forall (m \in M) \wedge (m \notin \mathcal{M}),  n \in \mathcal{N}, then \ (n, m) \notin E$ \} or \{ $\forall (n \in N) \wedge (n \notin \mathcal{N}), m \in \mathcal{M},  then \ (n, m) \notin E$ \}.      
 \end{definition}
 \vspace{-0.5em}

Based on this definition, we propose the notion of \emph{maximal half isolated biclique}.

\begin{definition}  [Maximal Half Isolated Biclique (MHI Biclique)] \label{def:MHI}
In a bipartite graph $G = (N \cup M, E)$, a subgraph  $\mathcal{G} = (\mathcal{N \cup M, E})$ is a maximal half isolated biclique if (1) $\mathcal{G}$ is a HI biclique; (2) there does not exist a HI biclique $\mathcal{G}_1$ = $(\mathcal{N}_1 \cup \mathcal{M}_1, \mathcal{E})$ where $\mathcal{N}_1 \supseteq \mathcal{N}$ and $\mathcal{M}_1 \supseteq \mathcal{M}$.      
 \end{definition}

An MHI biclique is essentially a biclique with at least one of its parts ($\mathcal{N}$ or $\mathcal{M}$) isolated. Even though restricted, 
MHI bicliques still cover a lot of interesting scenarios: the set of MHI bicliques in $G$ consists of all maximal isolated bicliques ($\mathcal{G}(\mathcal{N}, \mathcal{M}, \rho =1.0)$ if $\mathcal{G}$ is a maximal biclique), all maximal isolated bicliques + A-Cam, and all maximal isolated bicliques + P-Cam. Many real-world phenomena manifest themselves as MHI bicliques. As mentioned before, fraud groups with A-Cam are likely to form MHI bicliques. MHI bicliques can also be found in many other applications such as retweet boosting detection\cite{CATCHSYNC,FRAUDAR}, network intrusion\cite{MZOOM},  and genetic applications\cite{GENE,sanderson2003obtaining,mushlin2007graph}. Thus the following problem has a wide range of applications: 

\medskip
\noindent \textbf{Maximal Half Isolated Biclique Enumeration Problem (MHIBP):} In a bipartite graph $G = (N \cup M, E)$, enumerate all maximal half isolated bicliques.
\\

Solving MHIBP indicates finding a special set of  maximal bicliques. This may have considerable significance in graph theory and related areas. To the best of our knowledge, ours is the first linear-time complexity solution. This could be of independent interest.

In summary, we cast the group fraud detection problem as discovering a special case of bicliques, MHI Bicliques. The restricted nature of MHI Biclique avoids the intractability of the standard biclique problems. In the rest of the paper, we will introduce a linear time algorithm for MHIBP. We also provide a few theoretical results and empirical study regarding the excellent effectiveness and robustness of the algorithm, especially when applied to the application of anti-fraud.

\section{Suspiciousness Tree}\label{stree}

In this section, we introduce a novel data structure, denoted by Suspiciousness Tree or S-Tree, for solving MHIBP and variants. The idea is inspired by frequent-pattern tree (FP-tree) in association rule mining\cite{FPTREE}. To make it concrete, we describe the scheme in the context of fraud detection which requires a special score, denoted F-score that will be introduced later. However, we stress that F-score is only essential for fraud detection. It is easy to verify that, using F-score directly for ordering nodes, or replacing it with frequency or simply dictionary ranking of the node's identifier, or any consistent nodes ordering mechanism can solve the general MHIBP.

\subsection{Constructing Baskets}  

The construction of S-tree starts with modeling each target nodes with a structure called \emph{basket}, which captures a target node's local connectivity. Formally, given a target node $m \in M$, the basket $B(m)$ of $m$ is constructed as:
\begin{equation}
B(m) = \lbrace m, \ I(m), \ f(m)  \rbrace
\end{equation}
where $m$ is the identifier of $B(m)$. $I(m) = \{n \in N : (n, m) \in E\}$ is the set of source nodes that $m$ is connected with. $f(m)$, called the F-score, is a preliminary estimate of suspiciousness of $B(m)$, using only information local to $m$. 
F-score can be thought of as indicating how suspicious that source nodes in $I(m)$ appear in $B(m)$ when the basket is examined in isolation. The final suspiciousness of a source node is determined globally by both the nodes' F-scores and the structure of the graph.
In the following, we will first introduce our choices of F-score that work well empirically. We then present the mining algorithm that detects subgraphs with high synchrony.


%


\para{Determining $f(m)$.} Recall that we distinguish two modes: AOBG and ARBG where the bipartite graphs model different relations. ''Unusualness'' is different in the two modes.

\emph{AOBG mode.}  When elements in $M$ are \emph{objects}, an edge represents an interaction between an account and an object (e.g., a product or a followee). Intuitively, a target node with higher in-degree are less suspicious, since it has no incentive to collude with fraudster for the acquisition of popularity (e.g., a celebrity on Twitter). In this case, we define $f(m)$ as
\begin{equation}
f(m) = log(\frac{|E|}{|I(m)|+c}),
\end{equation}
where $c$ is a small constant to prevent excessive variability for small values of $|I(m)|$. Note that Max($|I(m)|) + c < |E|$. 

This can also been thought of as the self-information or surprisal \cite{INFORMATION} of a randomly chosen edge falls on a certain the basket. 

\emph{ARBG mode.}  In contrast, when nodes in $M$ represent resources (e.g., IP addresses, devices, etc.) that the accounts in $N$ use when interacting with the platform, a target node with higher in-degree is more suspicious as resource sharing is a key characteristic of group fraud. For example, it is highly suspicious for more than 1k users to login on the same IP address. In fact many works (e.g.,   \cite{SYNCHROTRAP,COPYCATCH,FRAUDAR,CROSSSPOT}) rely on detecting such sharing. In this case, we define $f(m)$ as
\begin{equation}
f(m) = log(|I(m)|+c).
\end{equation}
\smallskip

By the transformation, we acquire the baskets $\mathbf{B} = \{B(m), \forall m \in M\}$ from $G$. And we use $\mathbf{B}(\mathcal{M})$ to denote the set $\{B(m), \forall m \in \mathcal{M}\}$, and $\mathbf{I}(\mathcal{M})$ to denote the set $\{I(m), \forall m \in \mathcal{M}\}$.



\para{Time Complexity.} All baskets $\mathbf{B}$ can be constructed by one scan of $E$. Therefore, the time complexity of constructing $\mathbf{B}$ is $O(|E|)$. Noticing that $\sum_{m \in M} |I(m)| = |E|$. 

\begin{figure}[t]
\includegraphics[width = 0.8 \columnwidth, height = 0.6 \columnwidth ]{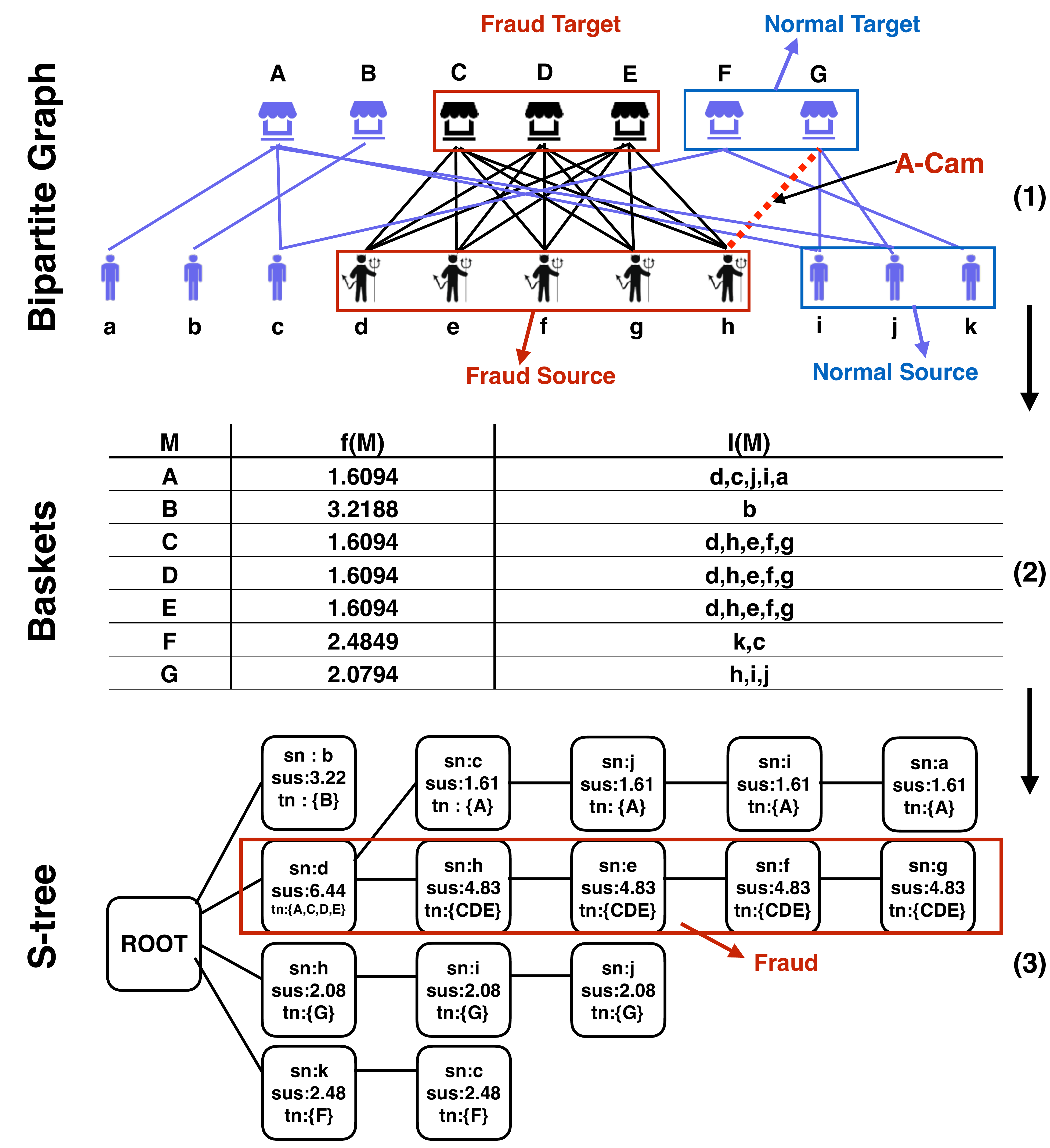}
\centering
\caption{Toy example of S-tree (Overflow) } \label{fig:examp}
\vspace{-2em}
\end{figure}

\subsection{Building Suspiciousness Tree}

The basket captures the local connectivity of a target node. Given a high-synchrony subgraph 
$\mathcal{G}(\mathcal{N}, \mathcal{M}, \rho \rightarrow 1.0 )$, nodes within $\mathcal{M}$ have similar neighbor sets. This can be shown with a toy example in Fig.\ref{fig:examp}. The fraud group consisting of a set of target nodes $\{C, D, E \}$ and a set of source nodes $\{d, e, f, g, h\}$, forming a dense subgraph. After baskets construction, $C, D, E$ result in identical neighbor sets: $I(C) = I(D) = I(E)$. Detecting high-synchrony subgraphs can be done efficiently via clustering similar nodes together. 

In addition to subgraphs with high synchrony, we also focus on \emph{suspicious} dense subgraphs. Consider a biclique formed by $(\mathcal{N}, \mathcal{M})$ and we assume $\mathcal{N}$ is a set of followers on Twitter. Intuitively, $(\mathcal{N}, \mathcal{M})$ is more suspicious if $\mathcal{M}$ is a set of ``nobody'' instead of a group of celebrities. Thus, we propose \emph{suspiciousness tree} or S-tree. Compared to FP-tree\cite{FPTREE}, S-tree has novel node structure, sorting metric, and increment method when its node is shared.

%

The construction of a S-tree is a process of handling each basket in $\mathbf{B}$. Let $T$ be the S-tree. A node $x$ on $T$ contains three fields: \textbf{sn}, \textbf{sus}, and \textbf{tn},  where $x.sn$ represents a particular node of $N$ or $M$ , $x.sus$ records a suspiciousness score for $x$, and $x.tn$ denotes a subset of associated nodes of $M$ or $N$.
 
Within a basket, the source nodes are sorted by their total suspiciousness  scores $g$ defined as 

\begin{equation}
g(n) = \sum_{m \in H(n)}  f(m),
\end{equation}
where $H(n) = \{m \in M:(n,m) \in E\}$ is set of target nodes that $n$ has edge with. 
Metric of this from obeys two basic properties (or `axioms'):
\begin{enumerate}
\item All other conditions being equal, the more frequent $n$ is in $\mathbf{B}$, the greater $g(n)$: $|H(n)| \uparrow \Rightarrow  g(n) \uparrow$.
\item All other conditions being equal, the more suspicious a basket that $n$ occurs in is,  the greater $g(n)$: $f(m) \uparrow \Rightarrow  g(n) \uparrow$.
\end{enumerate}

%

\begin{algorithm}[h]  
\caption{ S-tree construction } \label{alg.stree}
\begin{algorithmic}[1] 
\Require  $\mathbf{B}$, $g$
\Ensure $T$
\For{each $B(m) \in \mathbf{B}$}
\State sort $I(m)$ in a descending order of $g$ (Eq.(4))
\EndFor
\State $R \leftarrow  T$.root
\For{each $B(m) \in \mathbf{B}$}
\State $x \leftarrow R$
\While {$I(m) \neq \emptyset$}
\State  $n \leftarrow I(m).top()$  \# remove and return the first node in $I(m)$
\If{$x$ has a child $\hat{x}$ and $\hat{x}.sn = n$ }
\State  $\hat{x}.sus \leftarrow  \hat{x}.sus + f(m)$
\State $\hat{x}.tn \leftarrow  \hat{x}.tn \cup \{m\}$
\Else
\State $\hat{x} \leftarrow$ new node
\State $\hat{x}.sn \leftarrow n, \ \hat{x}.sus \leftarrow f(m),$ \ $\hat{x}.tn \leftarrow \{m\}$
\State $x$ add a child $\hat{x}$
\EndIf
\State $x \leftarrow \hat{x}$
\EndWhile
\EndFor
\State \Return $T$ 
\end{algorithmic}  
\label{alg:greedy}
\end{algorithm}

Given $\mathbf{B}$, S-tree $T$ is built by Algorithm \ref{alg.stree}.
Before inserting $\mathbf{B}$ into $T$, for each $B(m) \in \mathbf{B}$, we sort $I(m)$ in a descending order of $g$, which can provide better chances that the most suspicious and frequent source nodes are ranked at the head (line 1-2). Then, the construction begins from the root node that has no children (line 3). We process each basket one by one (line 4-15). In the process of processing one basket, we insert each source node within the basket in order (line 6-15). If a node to be inserted has already been shared on the path of the tree, its suspiciousness and neighboring nodes will be accumulated (Line 8-10). Otherwise, a new node will be created (line 11-14).    


\begin{definition} [Path $p(x)$] Given an S-tree $T$, let $x$ be a node of $T$, then $p(x)$ denotes the set of nodes along the path from the root node to $x$ in $T$, with the root node excluded.
\end{definition}

\para{Observation 1 : dense subgraphs $\Rightarrow$  shared path.} 
Consider a dense subgraph $\mathcal{G}(\mathcal{N}, \mathcal{M}, \rho \rightarrow 1.0)$. Then the baskets $\mathbf{B}(\mathcal{M})$ are highly similar to each other. By sorting  $\mathbf{I}(\mathcal{M})$ in terms of $g$, there are better chances that $\mathbf{I}(\mathcal{M})$ share a common prefix, denoted by $\mathcal{N}$. Then $\mathbf{B}(M)$ will be mapped into a subtree of S-tree, where $\forall x \in p(\hat{x}), x.sn \in \mathcal{N}, x.sus = \sum_{m \in \mathcal{M}}f(m),$ and $x.tn = \mathcal{M}$, if $\hat{x}.sn$ is the last node of sorted $\mathcal{N}$ by $g$.

It is easy to verify that $T$ has anti-monotonicity property:

\begin{theorem}\label{theo:anti}
[The Anti-monotonicity Property]
Let $x$ be a node of $T$. Then $\forall c \in p(x) $, it must be that $c.sus \geq x.sus$ and $c.tn \supseteq x.tn$.
\end{theorem}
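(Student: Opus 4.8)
The plan is to first reduce the claim from arbitrary ancestors to single tree edges, and then chain the two relations by transitivity. Because set inclusion $\supseteq$ and the order $\geq$ are both transitive, it suffices to prove that every non-root node $x$ and its parent $c$ satisfy $c.tn \supseteq x.tn$ and $c.sus \geq x.sus$; walking this edge-wise fact along the path from any $c \in p(x)$ down to $x$ then yields the theorem in full.

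The key to the edge-wise statement is to read off the insertion invariant of Algorithm~\ref{alg.stree}. Each basket $B(m)$ is processed by a single descending walk (the \textbf{while} loop), which moves down exactly one level per iteration via the reassignment $x \leftarrow \hat{x}$ that ends each loop body, and at each node it visits it unions $m$ into that node's $tn$ field and adds $f(m)$ to its $sus$ field (lines~8--10 or~11--14). Since $T$ is a tree, the only way this walk can reach $x$ is by passing through the unique parent $c$ one step earlier; hence every contribution of a basket $B(m)$ to $x$ is accompanied by the same contribution to $c$ during the same insertion. I would make this precise by induction on the sequence of basket insertions, maintaining as the loop invariant that $x.tn \subseteq c.tn$ and that $c.sus - x.sus$ equals the sum of $f(m)$ over exactly those baskets $B(m)$ whose walk reached $c$ but did not continue on to $x$.

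Granting the invariant, the inclusion $c.tn \supseteq x.tn$ is immediate. For the suspiciousness field, $c.sus - x.sus$ is the sum of $f(m)$ over the extra baskets that stopped at $c$ or branched to a sibling of $x$, so $c.sus \geq x.sus$ follows once $f(m) \geq 0$ for every $m$. This non-negativity is where I would invoke the F-score definitions: both choices are in fact strictly positive, the AOBG score because the stated condition $\mathrm{Max}(|I(m)|)+c < |E|$ forces the logarithm's argument above $1$, and the ARBG score because $|I(m)| + c \geq 1$ for every realized basket. (Here the symbol $c$ is the small F-score constant, not the ancestor node.) I note that the argument works verbatim for any nonnegative node weight, so it also covers the frequency-based ordering used for the general MHIBP.

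The only real obstacle, and it is a bookkeeping one rather than a conceptual one, is to argue the pairing between $x$ and $c$ cleanly despite node sharing: different baskets extend a common prefix of the tree in different ways, so I must be careful that no update to $x$ can be orphaned from a simultaneous update to $c$, and that no update to $c$ is double counted across insertions. Framing everything as the single insertion-by-insertion induction above, with the explicit expression for $c.sus - x.sus$ carried as part of the invariant, resolves this and keeps the argument entirely elementary, matching the paper's remark that the property is easy to verify.
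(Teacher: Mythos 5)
Your proposal is correct and takes essentially the same approach as the paper, whose one-line proof likewise observes that every basket $B(m)$ inserted by Algorithm~1 proceeds along a single descending walk, so any update contributing $f(m)$ and $\{m\}$ to a node makes the identical contribution to each of its ancestors, and monotonicity follows by chaining along the path. Your only substantive addition is making explicit the non-negativity of $f(m)$ --- a hypothesis the paper leaves implicit but which the $c.sus \geq x.sus$ inequality genuinely requires, and which you correctly verify for both the AOBG and ARBG score definitions.
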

\vspace{-0.5em}

\begin{proof}
The proof is obvious. When processing a basket $B(m)$ in Algorithm 1, and when a node $n \in I(m)$ is added to the tree, all its parent nodes' \emph{sus} fields will be incremented by $f(m)$ and their \emph{tn} fields will be expended by $\{m\}$.
\end{proof} \smallskip

\para{Time complexity.} From the S-tree construction, the cost of inserting a basket $B(m)$ into S-tree is $O(|I(m)|)$. Thus, the construction of S-tree takes $O(|E|)$ by inserting $\mathbf{B}$.

\para{Space complexity}
As a basket corresponds to a particular path of S-tree,  the size of S-tree is upper bounded by $|E|$. It reaches the bound when no node-sharing happens in the S-tree. And the depth of S-tree is bounded by Max$(|I(m)|)$.

\section{Dense Subgraph Mining with S-tree}

Building upon the S-tree structure, we now introduce two algorithms for dense subgraph mining. The former solves the generic MHIBP, while the latter is designed specifically for fraud group detection.

\subsection{Algorithm for Solving MHIBP}

Consider a maximal isolated biclique (MI biclique) and we use MI biclique + A-Cam to denote an isolated biclique + A-Cam  that is not contained in other isolated bicliques + A-Cam. We recall that MHI bicliques contain MI bicliques, MI bicliques + A-Cam, and MI bicliques + P-Cam. First, let us see how MHI biclique to be represented on S-tree. 

A \emph{branch} node is a node with more than one children. A \emph{leaf} node is a node without any child. We define a special type of nodes as the following: 

\begin{definition} [Narrow Node] \label{def:narrow}
In an S-tree $T$, let $x$ be a node that is neither a leaf node nor branch node. And let 
 $c$ denote the only child of $x$. Then $x$ is a narrow node if $x.tn \supset c.tn$.
\end{definition}

Then, we give the definition of \emph{maximal-shared path}:

\begin{definition} 
[Maximal-Shared Path]
Given a path $p(x)$, then we call $p(x)$ is a maximal-shared path if $x$ is a leaf, branch or narrow node. 
\end{definition}

Based on the two definitions,  we have the following Theorems. For brevity, we use $[\mathcal{N}, \mathcal{M}]$ to denote a biclqiue induced by $\mathcal{N}$ and $\mathcal{M}$.

\begin{theorem}\label{theo:sharing2}
Given an MI biclqiue $\mathcal{G}(\mathcal{N}, \mathcal{M}, \rho = 1.0)$, then  $\mathcal{G}$ must form a path $p(x)$ where $x$ is a leaf node in $T$.
\end{theorem}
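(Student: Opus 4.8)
The plan is to exploit the fact that isolation rigidly pins down the neighbor set of every target node in the biclique, so that all of their baskets collapse onto a single path, and then to use source-side isolation to forbid that path from being extended. Throughout I rely only on the biclique and isolation clauses of $\mathcal{G}(\mathcal{N},\mathcal{M},\rho=1.0)$ as stated in Definition \ref{def.rho} and on the construction in Algorithm \ref{alg.stree}.

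First I would establish that every $m\in\mathcal{M}$ has $I(m)=\mathcal{N}$ \emph{exactly}. The inclusion $\mathcal{N}\subseteq I(m)$ follows from $\rho=1.0$ (each $n\in\mathcal{N}$ is joined to every $m\in\mathcal{M}$, so $n\in I(m)$), while the reverse inclusion $I(m)\subseteq\mathcal{N}$ is precisely isolation condition (1): any edge incident to $m$ originates in $\mathcal{N}$. Hence every basket in $\mathbf{B}(\mathcal{M})$ carries the identical source set $\mathcal{N}$. Because the ranking of $I(m)$ by the global score $g$ (with a fixed, consistent tie-breaking rule) is deterministic and $g$ does not depend on which basket we examine, all $m\in\mathcal{M}$ yield the \emph{same} sorted sequence of $\mathcal{N}$. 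Running Algorithm \ref{alg.stree}, these baskets therefore traverse one common root-to-node path; letting $x$ be the node produced by the last element of that sorted sequence, the nodes of $p(x)$ have $sn$-fields ranging exactly over $\mathcal{N}$, and the accumulation in lines 8--10 gives $x.tn=\mathcal{M}$. This already realizes $\mathcal{G}$ as the path $p(x)$, so it only remains to prove that $x$ is a leaf.

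The crux is ruling out any child of $x$, which I would do by contradiction. Suppose $x$ has a child $c$. Since the $sn$-fields along a single root-to-leaf path are distinct and $p(x)$ already exhausts $\mathcal{N}$, we must have $c.sn=n'\notin\mathcal{N}$. A child at $c$ is created only by some basket $B(m')$ whose sorted $I(m')$ has $p(x)$ followed by $n'$ as a prefix, so $\mathcal{N}\cup\{n'\}\subseteq I(m')$. In particular $m'$ is adjacent to every node of $\mathcal{N}$; picking any $n\in\mathcal{N}$, the edge $(n,m')\in E$ together with isolation condition (2) of Definition \ref{def.rho} (any edge leaving $\mathcal{N}$ must land in $\mathcal{M}$) forces $m'\in\mathcal{M}$. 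But then the first step gives $I(m')=\mathcal{N}$, contradicting $n'\in I(m')\setminus\mathcal{N}$. Hence $x$ has no child and is a leaf.

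The main obstacle I anticipate is this last step: one must be careful to argue that a hypothetical child genuinely corresponds to a source node \emph{outside} $\mathcal{N}$ (which hinges on distinctness of $sn$ along a path), and, crucially, to invoke the correct isolation clause — it is source-side isolation, condition (2), rather than the target-side condition (1), that drags the offending basket back into $\mathcal{M}$ and yields the contradiction. A secondary point worth stating explicitly is that the collapse onto a single path requires the sorting to be deterministic, which the paper guarantees through its consistent ordering rule; without it the common path, and hence the claim, could fail.
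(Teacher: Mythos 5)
Your proof is correct and takes essentially the same route as the paper's: both pin down $I(m)=\mathcal{N}$ exactly for every $m\in\mathcal{M}$ (biclique plus target-side isolation), conclude that the identically sorted baskets $\mathbf{B}(\mathcal{M})$ collapse onto one shared path, and then use source-side isolation (each $n\in\mathcal{N}$ occurs only in $\mathbf{B}(\mathcal{M})$) to rule out any child of $x$. Your write-up just makes explicit two points the paper leaves implicit, namely the need for a deterministic tie-breaking rule in the sort and the contradiction argument showing a hypothetical child's basket would be forced back into $\mathcal{M}$.
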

\vspace{-0.5em}

\begin{proof}
Because  $\mathcal{G}(\mathcal{N}, \mathcal{M}, \rho = 1.0)$ is isolated, for each $n \in \mathcal{N}$, $n$ only occurs in $\mathbf{B}(\mathcal{M})$. 
Because $\rho = 1.0$, there must be that $\forall m_1, m_2 \in \mathcal{M}$, $I(m_1) = I(m_2)$, based on the basket construction. By Algorithm 1, 
 $\mathbf{I}(\mathcal{M})$ must have same order and $\mathbf{B}(\mathcal{M})$ must form a shared path $p(x)$ where $x$ is a leaf node in $T$ and it must hold that $x.tn = \mathcal{M}, x.sus = \sum_{m \in \mathcal{M}}f(m)$, and $\{c.sn, \forall c  \in p(x) \} = \mathcal{N}$.
\end{proof}

\vspace{-0.5em}
\begin{theorem}\label{theo:sharing1}
Let $T$ be the S-tree for $M$.
Given an MI biclqiue $\mathcal{G}(\mathcal{N}, \mathcal{M}, \rho = 1.0)$ (+ A-Cam), then  $\mathcal{G}$ must form a maximal-shared path $p(x)$ in $T$. Let $\mathcal{M}$ be the set $\bigcup_{c \in C}c.tn$ where $C$ denotes the children of $x$, $\mathcal{N}$ be the set $\{c.sn, \forall c  \in p(x) \}$. Then, $[\mathcal{N},  x.tn]$ must be an MI bilique (+ A-Cam) if $x$ is a leaf node and $ [ \mathcal{N},  (x.tn - \mathcal{M})] $ must be an MI bilique + A-Cam if $x$ is either a branch node or narrow node and $(x.tn - \mathcal{M}) \neq \emptyset$.
\end{theorem}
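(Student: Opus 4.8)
The plan is to prove the two implications bundled in the statement—first that any MI biclique (+ A-Cam) induces a maximal-shared path, then that reading the sets $\mathcal{N}$ and $x.tn - \mathcal{M}$ off such a path recovers an MI biclique (+ A-Cam)—and to make the node-\emph{termination} behaviour the pivot of both arguments. Throughout I will assume, as the paper does, that the ordering by $g$ is a fixed total order (ties broken consistently), so that two target nodes with identical neighbor sets are inserted along identical paths by Algorithm \ref{alg.stree}.

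For the forward direction I would start from the core biclique $\mathcal{G}(\mathcal{N}_0, \mathcal{M}_0, \rho=1.0)$. Since the biclique is complete, $\rho = 1$, and the $\mathcal{M}_0$-side is isolated in the sense of Definition \ref{def.rho}(1) (A-Cam only adds edges out of $\mathcal{N}_0$, never into $\mathcal{M}_0$), every $m \in \mathcal{M}_0$ has $I(m) = \mathcal{N}_0$, just as in the proof of Theorem \ref{theo:sharing2}. These identical baskets are therefore mapped onto one shared path whose last node $x$ sits at depth $|\mathcal{N}_0|$, with $\{c.sn : c \in p(x)\} = \mathcal{N}_0$. The key observation is that each such basket terminates at $x$: we have $\mathcal{M}_0 \subseteq x.tn$, yet $\mathcal{M}_0 \cap c.tn = \emptyset$ for every child $c$ of $x$, since the sorted $I(m)$ has no entry beyond $x$. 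Consequently, if $x$ has no child it is a leaf; if it has two or more it is a branch; and if it has a single child $c$ then, using $x.tn \supseteq c.tn$ from the anti-monotonicity property (Theorem \ref{theo:anti}), $x.tn \supseteq \mathcal{M}_0 \cup c.tn \supsetneq c.tn$, so $x$ is narrow in the sense of Definition \ref{def:narrow}. In every case $p(x)$ is maximal-shared. Camouflage targets may or may not share the prefix $\mathcal{N}_0$ and may extend the path past $x$, but this only affects which of the three cases occurs, not the conclusion.

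For the backward direction I would fix a maximal-shared path $p(x)$ and set $\mathcal{N} = \{c.sn : c \in p(x)\}$, $\mathcal{M} = \bigcup_{c \in C} c.tn$, and $\mathcal{M}^\ast = x.tn - \mathcal{M}$ (so $\mathcal{M}^\ast = x.tn$ when $x$ is a leaf). The first step is to show $I(m) = \mathcal{N}$ for every $m \in \mathcal{M}^\ast$: membership in $x.tn$ forces $\mathcal{N} \subseteq I(m)$ and fixes the first $|\mathcal{N}|$ sorted entries, while $m \notin \mathcal{M}$ means the basket continues to no child, so the sorted $I(m)$ stops at $x$ and $I(m) = \mathcal{N}$ exactly. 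This immediately makes $[\mathcal{N}, \mathcal{M}^\ast]$ a complete biclique with the $\mathcal{M}^\ast$-side isolated. Maximality then follows from two exhaustiveness facts I would verify against the tree: $\mathcal{M}^\ast$ is precisely $\{m : I(m) = \mathcal{N}\}$, so no further target can be added without destroying isolation, and any source lying in every $I(m)$, $m \in \mathcal{M}^\ast$, already belongs to $\mathcal{N}$, so $\mathcal{N}$ cannot be enlarged. Finally I would read off the A-Cam status: when $x$ is a branch or narrow node the children's targets $\mathcal{M}$ pass through all of $p(x)$ and hence are joined to every node of $\mathcal{N}$, yet lie outside $\mathcal{M}^\ast$; these play exactly the role of $\hat{\mathcal{M}}$ in Definition \ref{def:acam}, so $[\mathcal{N}, \mathcal{M}^\ast]$ is an MI biclique + A-Cam, whereas a leaf $x$ with no branching along $p(x)$ yields a pure MI biclique (and a leaf with earlier branching yields the ``(+ A-Cam)'' variant).

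I expect the main obstacle to be the forward direction's case analysis on the endpoint $x$—in particular isolating the narrow-node case, where a single child is present but the biclique targets still drop off at $x$. This is the genuinely new behaviour beyond Theorem \ref{theo:sharing2}, and it hinges delicately on the strict containment $x.tn \supsetneq c.tn$ supplied by Theorem \ref{theo:anti}. A secondary difficulty is the maximality argument in the backward direction, which must invoke the global determinism of the $g$-ordering to guarantee that all targets with neighbor set exactly $\mathcal{N}$ really do collapse onto the single node $x$, so that nothing maximal is overlooked.
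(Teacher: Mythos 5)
Your proposal is correct, and it follows the paper's overall decomposition (A-Cam does not alter the baskets $\mathbf{B}(\mathcal{M})$, so Theorem \ref{theo:sharing2} gives a shared path; then a case analysis on whether the terminal node $x$ is a leaf, branch, or narrow node), but the key step is handled by a genuinely different and in fact tighter mechanism. The paper rules out the bad case ($x$ with a single child $c$ and $c.tn = x.tn$) by a containment argument that appeals to the \emph{maximality} of $\mathcal{G}$: it identifies $\mathcal{G}$ with $[\{c_1.sn : c_1 \in p(x)\}, x.tn]$ and derives a contradiction with Definition \ref{def:MHI}. You instead observe that every basket of $\mathcal{M}_0$ \emph{terminates} at $x$ (since $I(m)=\mathcal{N}_0$ exhausts the path), so $\mathcal{M}_0 \subseteq x.tn$ while $\mathcal{M}_0 \cap c.tn = \emptyset$, forcing $x.tn \supsetneq c.tn$ whenever $x$ has one child; this makes $x$ automatically leaf, branch, or narrow \emph{without} invoking maximality, and it proves the shared-path claim for any isolated biclique + A-Cam, maximal or not. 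Your converse direction is also more complete than the paper's: where the paper asserts tersely that ``$I(m)$ must be same and it is not contained in other MI biclique + A-Cam,'' you supply the actual maximality argument via the exact characterization $x.tn - \mathcal{M} = \{m : I(m) = \mathcal{N}\}$ (no target can be added without violating isolation; $\bigcap_{m} I(m) = \mathcal{N}$ blocks source enlargement), and you make explicit the deterministic tie-breaking in the $g$-ordering that both directions silently require. One small inaccuracy, inessential to the theorem: your closing parenthetical that a leaf $x$ with no branching along $p(x)$ yields a \emph{pure} MI biclique is not quite right, since a source in $\mathcal{N}$ can also occur in baskets mapped to entirely disjoint paths of $T$ (the same source node may label nodes on many paths), giving A-Cam edges with no branching on $p(x)$; the theorem's ``(+ A-Cam)'' phrasing covers both possibilities, so nothing breaks, but the dichotomy you state cannot be read off from $p(x)$ alone.
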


\begin{proof} 
A-Cam only introduces an edge pointed from a source node $n \in \mathcal{N}$ to a target node $m \not \in \mathcal{M}$, thus the camouflage does not change baskets $\mathbf{B}(\mathcal{M})$. 
Because $\rho = 1.0$ and Theorem \ref{theo:sharing2}, $\mathbf{B}(\mathcal{M})$ must form a shared path $p(x)$ where $x$ is last element of $I(m), m \in \mathcal{M}$. Now, let us consider the position of $x$ in $T$. 

First, suppose $x$ is \emph{not} a branch, leaf, or narrow node. Then $x$ must have exactly one child node $c$ and $c.tn = x.tn$ based on Definition \ref{def:narrow} and Theorem \ref{theo:anti}. Thus $p(c)$ must be mapped by $[ \{c_2.sn, \forall c_2  \in p(c) \},  c.tn]$.   Because $p(x) \subset p(c)$, we conclude that 
$
[ \{c_1.sn, \forall c_1  \in p(x) \},  x.tn]  \subset [ \{c_2.sn, \forall c_2  \in p(c) \},  c.tn]$,
which contradicts the definition of MI biclique(+ A-Cam).

Therefore $x$ must be either a branch, leaf, or narrow node. 
If $x$ is a
\begin{enumerate}
\item branch node: $[\mathcal{N},  x.tn]$ must not be an MI biclique + A-Cam, because for each $m_1 \in x.tn$, $ \exists m_2 \in x.tn$, then $I(m_1) \neq I(m_2)$ due to  $|C|>1$,  which is contradicted to Theorem \ref{theo:sharing2}. If $|x.tn| > |\mathcal{M}|$, then $[\mathcal{N},  \mathcal{M}]$ must not be an MI biclique + A-Cam, because for each $m_1 \in \mathcal{M}$, $ \exists n_1 \in I(m_1)$ satisfying $n_1 \not \in \mathcal{N}$,  which is contradicted to Definition \ref{def:acam}. And, 
 $ [\mathcal{N},  (x.tn - \mathcal{M})] $ must be an MI biclique + A-Cam, because 
$\forall m \in (x.tn - \mathcal{M})$, $I(m)$ must be same and it is not contained in other MI biclique + A-Cam.   
\item leaf node: Based on Theorem \ref{theo:sharing2}, $[ \mathcal{N},  x.tn]$ must be an MI biclique (+ A-Cam).
\item narrow node: $x$ has one child $c$, $x.tn \supset c.tn$, and $\mathcal{M}$ is $c.tn$. Then $[ \mathcal{N},  (x.tn - \mathcal{M})]$ must be an MI biclique + A-Cam (proof is same to the condition if $x$ is a branch node).
\end{enumerate}
Therefore MI biclique(+ A-Cam) must form a maximal-shared path and for each maximal-shared path $p(x)$, $[\mathcal{N},  x.tn]$ must be an MI bilique (+ A-Cam) if $x$ is a leaf node and $ [ \mathcal{N},  (x.tn - \mathcal{M})] $ must be an MI bilique + A-Cam if $x$ is either a branch node or narrow node and $(x.tn - \mathcal{M}) \neq \emptyset$.
\end{proof}

As previously mentioned, $T$ is the S-tree for the target nodes in $G$. We can also build the S-tree for the source nodes using the same methods. We denote such an S-tree by $T_1$. 

\begin{theorem}\label{theo:sharing3} 
Let $T_1$ be the S-tree for $N$. Given an MI biclqiue $\mathcal{G}(\mathcal{N}, \mathcal{M}, \rho = 1.0)$ (+ P-Cam), then  $\mathcal{G}$ must form a maximal-shared path $p(x)$ in $T_1$. Let $\mathcal{M}$ be the set $\bigcup_{c \in C}c.tn$ where $C$ denotes the children of $x$, $\mathcal{N}$ be the set $\{c.sn, \forall c  \in p(x) \}$. Then, $[\mathcal{N},  x.tn]$ must be an MI bilique (+ P-Cam) if $x$ is a leaf node and $ [ \mathcal{N},  (x.tn - \mathcal{M})] $ must be an MI bilique + P-Cam if $x$ is either a branch node or narrow node and $(x.tn - \mathcal{M}) \neq \emptyset$.
\end{theorem}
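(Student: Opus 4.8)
The plan is to obtain Theorem~\ref{theo:sharing3} as the exact dual of Theorem~\ref{theo:sharing1} under the involution that interchanges the two sides of the bipartite graph. Recall that $T_1$ is produced by the very same procedure (Algorithm~\ref{alg.stree}) as $T$, only applied to the source-node baskets $B(n) = \{n, H(n), f(n)\}$ in place of the target-node baskets $B(m) = \{m, I(m), f(m)\}$. The first step is therefore to make this correspondence precise by passing to the transposed instance $\tilde{G}$ in which source and target nodes exchange roles: there $T_1$ on $G$ is \emph{literally} the target-side S-tree of $\tilde{G}$. Under this transposition, the anti-monotonicity of Theorem~\ref{theo:anti}, the leaf-path formation of Theorem~\ref{theo:sharing2}, and the whole case analysis of Theorem~\ref{theo:sharing1} carry over once $N$ and $M$ (and correspondingly $\mathcal{N}$ and $\mathcal{M}$) are swapped and the \emph{sn}/\emph{tn} fields are read in the dual assignment.

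The second step is to check that P-Cam plays, for $T_1$, exactly the role that A-Cam plays for $T$. By Definition~\ref{def:pcam}, P-Cam relaxes only isolation condition~(1) of Definition~\ref{def.rho}: it attaches outside source nodes $\hat{\mathcal{N}}$ to targets in $\mathcal{M}$, leaving condition~(2) intact, so every $n \in \mathcal{N}$ still emits edges only into $\mathcal{M}$ and hence $H(n) = \mathcal{M}$ for all $n \in \mathcal{N}$. Consequently the source-node baskets $\mathbf{B}(\mathcal{N})$ are untouched by the camouflage, precisely mirroring the fact used in Theorem~\ref{theo:sharing1} that A-Cam (which relaxes only condition~(2)) leaves $\mathbf{B}(\mathcal{M})$ with $I(m) = \mathcal{N}$ unchanged. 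Since the baskets in $\mathbf{B}(\mathcal{N})$ are all identical, they sort identically and, by the dual of Theorem~\ref{theo:sharing2}, map to a single shared path $p(x)$ in $T_1$ whose path identifiers and accumulated \emph{tn} set recover, in the dual assignment, the two sides of the biclique.

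With these two reductions in hand, the third step is a direct transcription of the three-case argument of Theorem~\ref{theo:sharing1}. If $x$ is a leaf, the path carries no extra nodes and yields an MI biclique (+ P-Cam); if $x$ is a branch or narrow node, the surplus entries of $x.tn$ correspond to P-Cam nodes that share only part of the biclique, so subtracting the "continued" portion (the dual of subtracting $\mathcal{M}$ in Theorem~\ref{theo:sharing1}) isolates the maximal MI biclique + P-Cam, with maximality following from Definition~\ref{def:narrow} and Theorem~\ref{theo:anti} exactly as before. The one point demanding care — and the main obstacle — is that the F-scores are defined asymmetrically for the two sides (AOBG versus ARBG), so the transposition does not literally preserve the ordering metric $g$; this is resolved by the observation made at the start of Section~\ref{stree} that any \emph{consistent} node ordering suffices for the generic MHIBP, which guarantees that identical source-node baskets are sorted into the same sequence and lets the dualized argument close.
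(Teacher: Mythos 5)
Your proposal is correct and takes essentially the same route as the paper: the paper's entire proof of Theorem~\ref{theo:sharing3} is the one-line assertion that it is ``the same as the proof in Theorem~\ref{theo:sharing1}'', i.e., precisely the side-swapping duality (P-Cam leaves the source-node baskets $\mathbf{B}(\mathcal{N})$ with $H(n)=\mathcal{M}$ intact, dual to A-Cam preserving $\mathbf{B}(\mathcal{M})$) that you spell out. Your extra step addressing the asymmetric F-score definitions via the ``any consistent ordering suffices'' remark from Section~\ref{stree} patches a detail the paper silently glosses over, and it closes correctly because identical baskets sort into the same sequence under any consistent ordering.
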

\vspace{-0.5em}
\vspace{-0.5em}
\begin{proof}
The proof is same as the proof in Theorem \ref{theo:sharing1}.
\end{proof}
\vspace{-0.5em}

Let $\mathcal{G}(\mathcal{N}, \mathcal{M}, \rho = 1.0)$ denote an MI biclqiue. Based on Theorem 2-4, we propose Algorithm 2-3 to catch all $\mathcal{G}$ (+ A-Cam) in $T$ or $\mathcal{G}$ (+ P-Cam) in $T_1$.

\begin{algorithm}[ht]
\caption{\emph{main}} 
\begin{algorithmic}[1]
\Require  $T$
\Ensure $\hat{\mathcal{G}}s$
\State $\hat{\mathcal{G}}s \leftarrow \emptyset$
\State $R \leftarrow T.root$
\State \emph{find MHI bicliques}($R, \hat{\mathcal{G}}s$)
\State \Return $\hat{\mathcal{G}}s$
\end{algorithmic} 
\end{algorithm}
\vspace{-0.5em}

\begin{algorithm}[h]
\caption{\emph{find MHI bicliques}} 
\begin{algorithmic}[1]
\Require  $x$, $\hat{\mathcal{G}}s$
\State  $C \leftarrow x.childern$ 
\State $\mathcal{M} \leftarrow x.tn$
\State $A \leftarrow \{a.xn, \forall a \in p(x) \}$
\If{$|C| > 1$}
   \State $\mathcal{M}' \leftarrow \emptyset$
    \For {each $c \in C$}
    \State $\mathcal{M}' \leftarrow \mathcal{M}' \cup c.tn$
    \EndFor
    \If {$|\mathcal{M}| > |\mathcal{M}'|$}
    \State $\hat{\mathcal{G}}s \leftarrow  \hat{\mathcal{G}}s \cup \{ (A, \mathcal{M} - \mathcal{M}') \}$ 
    \EndIf
\EndIf
\If{$|C| = 1$}
    \State $\mathcal{M}' \leftarrow x.child.tn$
    \If{$|\mathcal{M}| > |\mathcal{M}'|$}
    \State $\hat{\mathcal{G}}s \leftarrow  \hat{\mathcal{G}}s \cup  \{(A, \mathcal{M} - \mathcal{M}')\}$ 
    \EndIf
\EndIf
\If{$|C| = 0$}
     \State $\hat{\mathcal{G}}s \leftarrow  \hat{\mathcal{G}}s \cup \{(A, \mathcal{M})\}$
\State \Return
\EndIf
\For{each $\hat{x} \in C$}
\State \emph{find MHI bicliques}($\hat{x}, \hat{\mathcal{G}}s$)
\EndFor
\end{algorithmic}  
\end{algorithm}
\vspace{-0.5em}

Obviously, Algorithm 3 is a depth-first search algorithm. It searches all leaf nodes (line 14-15), branch nodes (line 4-9), and narrow nodes (line 10-13) by one scan all nodes. Thus Algorithm 2-3 catch all maximal-shared paths and corresponding $\mathcal{G}$ ( + A-Cam).

\begin{theorem}
Let $\hat{\mathcal{G}}_{S}$s be the bicliques returned by running Algorithm 2-3 on $T$, and $\hat{\mathcal{G}}_{T}$s be the bicliques returned by running Algorithm 2-3 on $T_1$.
Then $\hat{\mathcal{G}}$s $= merge ( \hat{\mathcal{G}}_{S}$s, $\hat{\mathcal{G}}_{T}$s) (Algorithm \ref{alg:merge}) must be all MHI bicliques in $G$. 
\end{theorem}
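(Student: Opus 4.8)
The plan is to prove the statement by a double inclusion, reducing everything to the structural characterization of MHI bicliques together with Theorems \ref{theo:sharing1} and \ref{theo:sharing3}. The starting point is the decomposition asserted when MHI bicliques were introduced: every MHI biclique of $G$ is either a maximal isolated (MI) biclique, an MI biclique + A-Cam, or an MI biclique + P-Cam. First I would make this decomposition precise by inspecting Definition \ref{def:HI}: condition (2) forces at least one part of the biclique to be isolated. If $\mathcal{M}$ is the isolated part, the only remaining freedom is that nodes of $\mathcal{N}$ may carry extra edges to targets outside $\mathcal{M}$, which is exactly A-Cam (Definition \ref{def:acam}); symmetrically, if $\mathcal{N}$ is the isolated part, the extra edges can only enter $\mathcal{M}$ from outside source nodes, which is exactly P-Cam (Definition \ref{def:pcam}). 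When both parts are isolated there is no camouflage and we recover a pure MI biclique. Moreover, when the camouflage is nonempty the two camouflaged classes are disjoint, since A-Cam destroys the isolation of $\mathcal{N}$ while P-Cam destroys the isolation of $\mathcal{M}$; hence the three classes overlap exactly in the pure MI bicliques. This bookkeeping is precisely what the merge step must respect.

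With the decomposition in hand, completeness ($\supseteq$) is immediate. By Theorem \ref{theo:sharing1}, running Algorithm 2-3 on $T$ (the target-node S-tree) returns every MI biclique and every MI biclique + A-Cam, so $\hat{\mathcal{G}}_{S}$s covers two of the three classes; by Theorem \ref{theo:sharing3}, running the same algorithms on $T_1$ (the source-node S-tree) returns every MI biclique and every MI biclique + P-Cam, so $\hat{\mathcal{G}}_{T}$s covers the remaining class and, again, the MI bicliques. Thus the combined collection $\hat{\mathcal{G}}_{S}$s together with $\hat{\mathcal{G}}_{T}$s already contains all three classes, and therefore every MHI biclique. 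Soundness ($\subseteq$) is the other direction: Theorems \ref{theo:sharing1} and \ref{theo:sharing3} guarantee that each biclique returned on $T$ is an MI biclique (+ A-Cam) and each returned on $T_1$ is an MI biclique (+ P-Cam); since every such object is by construction half isolated and maximal within its class, each is an MHI biclique, so nothing spurious is produced.

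The remaining work, and the step I expect to be the real obstacle, is to verify that $merge$ (Algorithm \ref{alg:merge}) turns the combined collection into \emph{exactly} the set of MHI bicliques, neither dropping nor duplicating any member. The subtlety is the overlap identified above: a pure MI biclique $[\mathcal{N},\mathcal{M}]$ is discovered twice, once as a leaf-node maximal-shared path in $T$ (with $\mathcal{N}$ read off the path and $\mathcal{M}=x.tn$) and once as a leaf-node maximal-shared path in $T_1$ with the roles of $\mathcal{N}$ and $\mathcal{M}$ interchanged. I would argue that these two representations denote the same vertex pair and that $merge$ canonicalizes and deduplicates them, while leaving the genuinely one-sided outputs untouched, since A-Cam bicliques from $T$ and P-Cam bicliques from $T_1$ carry nonempty, side-specific camouflage and hence cannot coincide. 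One must also confirm that the branch- and narrow-node outputs never reach $merge$ in a degenerate form: Algorithm 3 already filters these through the guards $|\mathcal{M}| > |\mathcal{M}'|$, so the camouflaged part $x.tn - \mathcal{M}'$ is always nonempty. Because the full pseudocode of $merge$ is only available via Algorithm \ref{alg:merge}, the crux is formally checking that it realizes exactly this deduplication-and-union behavior; once that is established, combining the two inclusions shows that the merged output is precisely the set of all MHI bicliques in $G$.
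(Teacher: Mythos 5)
Your completeness direction and your decomposition of MHI bicliques into the three classes (MI, MI + A-Cam, MI + P-Cam) match the paper's (very terse) argument, which simply cites Theorems \ref{theo:sharing1} and \ref{theo:sharing3} and then appeals to \emph{merge}. But your soundness step contains a genuine gap: you claim that each returned object, being ``half isolated and maximal within its class,'' is automatically an MHI biclique, so that \emph{merge} only needs to deduplicate the pure MI bicliques found on both trees while ``leaving the genuinely one-sided outputs untouched.'' This is false. Definition \ref{def:MHI} quantifies over \emph{all} HI bicliques, including those whose isolated side is the opposite one, so maximality within the A-Cam class does not imply maximality as an HI biclique. Concretely, take edges $(n_1,m_1)$, $(n_2,m_1)$, $(n_1,m_2)$, $(n_2,m_2)$, $(n_3,m_2)$. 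Then $[\{n_1,n_2\},\{m_1\}]$ is an MI biclique + A-Cam (it cannot be enlarged within its class: adding $m_2$ breaks target-side isolation because of $n_3$, and $n_3$ cannot be added since $(n_3,m_1)\notin E$), yet it is strictly contained in the HI biclique $[\{n_1,n_2\},\{m_1,m_2\}]$, whose source side is isolated and for which $(n_3,m_2)$ is P-Cam. So this output of the run on $T$ is \emph{not} an MHI biclique and must be deleted. Your observation that the two camouflaged classes are disjoint addresses equality of outputs, not containment between them, and therefore does not secure soundness.

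This cross-class containment is precisely why Algorithm \ref{alg:merge} removes not only repeated but also \emph{contained} bicliques (the test $\hat{\mathcal{G}} \subseteq \mathcal{G}'$, where $\mathcal{G}'$ ranges over bicliques represented by leaf, branch, or narrow nodes on the path mapped by $B(n)$ in the other tree); the paper flags the issue explicitly right after the algorithm: ``$\hat{\mathcal{G}}$ may be contained in an MI biclique + P-Cam.'' To repair your proof, replace the ``maximal within its class $\Rightarrow$ MHI'' claim by the weaker, correct statement that every MHI biclique appears in $\hat{\mathcal{G}}_{S}$s $\cup$ $\hat{\mathcal{G}}_{T}$s (completeness, as you argued), and then show that \emph{merge} removes an element if and only if it is duplicated or strictly contained in an HI biclique of the other class---in particular, that checking containment only against bicliques on the single path mapped by $B(n)$ in the other tree suffices, since any HI biclique containing $\hat{\mathcal{G}}$ must contain the chosen node $n$ and hence, by Theorems \ref{theo:sharing1} and \ref{theo:sharing3}, be represented by a leaf, branch, or narrow node on that path. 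The rest of your write-up (the guard $|\mathcal{M}|>|\mathcal{M}'|$ ensuring nonempty branch/narrow outputs, and the double discovery of pure MI bicliques) is sound and in fact more careful than the paper's own one-line proof.
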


\begin{proof}
Based Theorem \ref{theo:sharing1} and Theorem \ref{theo:sharing2}, running Algorithm 2-3 on $T$ detects all MI biclqiues (+ A-Cam), while running Algorithm 2-3 on $T_1$ finds all MI biclqiues (+ P-Cam).  And \emph{merge}($\hat{\mathcal{G}}_{S}$s, $\hat{\mathcal{G}}_{T}$s) deletes all repeated and contained bicliques.
Thus Algorithm 2-4 + S-tree solves MHIBP.
\end{proof}

\begin{algorithm}[h] 
\caption{\emph{merge}} \label{alg:merge}
\begin{algorithmic}[1]
\Require  $\hat{\mathcal{G}}_{S}$s, $\hat{\mathcal{G}}_{T}$s 
\For {each $\hat{\mathcal{G}}$s $\in \{ \hat{\mathcal{G}}_{S}$s, $\hat{\mathcal{G}}_{T}$s\}}
\For {each $\hat{\mathcal{G}} \in \hat{\mathcal{G}}$s} 
\State $(\mathcal{N, M}) \leftarrow \hat{\mathcal{G}}$ 
\State $n \leftarrow \mathcal{N}.top()$ \# Return a node of $\mathcal{N}$.
\State let $p(x)$ be the path mapped by $B(n)$ in $\hat{T}$. \# If $\hat{\mathcal{G}} \in \hat{\mathcal{G}}_{S}$s, $\hat{T} \leftarrow T_1$; Else $\hat{T} \leftarrow T$ 
\For {each $c \in p(x)$}
\If { $c$ is a leaf, branch, or narrow node}
    \State $\mathcal{G}'  \leftarrow$ biclique represented by $c$ \# See Theorem 3-4
    \If{$ \hat{\mathcal{G}} \subseteq  \mathcal{G}'$}
    \State remove $\hat{\mathcal{G}}$ from $\hat{\mathcal{G}}$s
    \EndIf 
\EndIf
\EndFor
\EndFor
\EndFor
\State \Return $\hat{\mathcal{G}}_{S}$s + $\hat{\mathcal{G}}_{T}$s
\end{algorithmic}  
\end{algorithm}

For each $ \hat{\mathcal{G}} \in \hat{\mathcal{G}}_{S}$s, there is a guarantee that $\hat{\mathcal{G}}$ must be an MI biclqiue (+ A-Cam). However,  $\hat{\mathcal{G}}$ may be contained in an MI biclqiue + P-Cam.  Algorithm \ref{alg:merge} describes the way we merge repeated or contained HI bicliques.
Algorithm \ref{alg:merge} efficiently prunes unnecessary pairwise comparisons, where a $\hat{\mathcal{G}}$ is only compared with a subset of $\hat{\mathcal{G}}_{T}$s.
If $\hat{\mathcal{G}} \subseteq  \mathcal{G}'$, $\mathcal{G}'$ must 
share each source node (eg., $n$) in $\hat{\mathcal{G}}$ and $\mathcal{G}'$ must be represented by either a  leaf, branch, or narrow node on the shared path mapped by $B(n)$ on $T_1$ (line 3-8).

\subsection{Practical Algorithm for Fraud Group Detection} \label{sec:mining}

In the previous section we provide theoretical results on MHIBP which essentially deals with subgraphs with full synchrony ($\rho = 1$). In the real world, our target pattern (e.g., fraud groups) may not form a strict biclique, but a dense subgraph. Let $\mathcal{G}(\mathcal{N}, \mathcal{M}, \rho \rightarrow 1.0)$ represent the subgraph formed by a fraud group and $t$ denote the subtree mapped by $\mathcal{G}$.   Based on Observation 1,  there will be a shared prefix $p(x)$ in $t$. Thus detecting $\mathcal{G}$ is equivalent to detecting $p(x)$. Building upon this property, we now introduce an algorithm that effectively detects isolated-dense subgraphs with A-Cam and/or P-Cam.

First, we set up two parameters, \emph{thickness} and \emph{depth}, which define a suspiciousness boundary: a node $x$ is suspicious if the depth of $x$ in $T$ is more than \emph{depth} and $x.sus$ is greater than \emph{thickness}. The boundary selects the subtrees with shared prefix that correspond to dense subgraphs. The algorithm is described in Algorithm 4. We use $x.descendant$ to denote the set $\{c \ $in \ $ T: x \in p(c)\}$.

\begin{algorithm}[h]
\caption{} \label{alg:p}
\begin{algorithmic}[1]
\Require  $T$, \emph{thickness}, \emph{depth}
\State $X_{sus} \leftarrow \emptyset$
\State Start search from the root node $R$.
\State Find all nodes $X$ of which depth $=$ \emph{depth}.
\For {$x$ in $X$}
    \If {$x.sus$ $\geq$ \emph{thickness}}
    \State $X_{sus} \leftarrow X_{sus} + p(x) + x.descendant$
    \EndIf
\EndFor
\State \Return  $X_{sus}$
\end{algorithmic}  
\end{algorithm}
\vspace{-0.5em}

In Algorithm \ref{alg:p}, we retain the whole subtree (line 6) if its shared prefix exceeds the threshold. This, however, may result in false alarms (legit users being included in $X_{sus}$). We propose the following metric to mitigate the issue. 

Given $X_{sus}$, we calculate a suspiciousness score (s-score) of a source node as:

\begin{equation} \label{eqn.s-score}
s(n) = \sum_{\forall x \in X_{sus}} x.sus \ \ \text{if} \ x.sn == n.
\end{equation}
We sort all nodes in $N$ in descending order of $s$, given $X_{sus}$.
S-score has a very nice property that helps suppressing false positives:  $B(m_i)$ may contain normal source nodes if some normal users accidentally create edges pointed to $m_i$. Given a set of fraud source nodes $\mathcal{N}$ and a set of fraud target nodes $\mathcal{M}$, $m_i \in \mathcal{M}$, let $n_g$ denote a normal source node having an edge with $m_i$. Because $n_g$ should not have edges with the majority of $\mathcal{M}$ while $\mathcal{N}$ does, i.e., $n_g$ only occurs in $B(m)$ while $\mathcal{N}$ frequently occurs in $\mathbf{B}(\mathcal{M})$.    Thus, $s(n_g)$ is much lower than $s(\mathcal{N})$.

\para{Determining \emph{thickness} and \emph{depth}.}
The thresholds \emph{thickness} and \emph{depth} can be determined empirically. In practice, we found that choosing the averages works well:
\begin{displaymath}
thickness = \frac{ \sum_{\forall x  \in T} x.sus}{|T|},
\end{displaymath}
\vspace{-0.5em}
And,

\begin{displaymath}
depth = \frac{|E|-|T|}{|\mathbf{B}|} ,
\end{displaymath}
where $|T|$ denotes the number of nodes of $T$ and $|E| = \sum_{B(m) \in \mathbf(B)} |I(m)|$.  ($|E|-|T|$) denotes the number of source nodes that are compacted to shared nodes in $T$, and  the number of paths in $T$ is upper bounded by $|\mathbf{B}|$. Thus, \emph{thickness} is average suspiciousness score of each node in S-tree. \emph{depth} is the average length of shared prefix of each subtree.

\vspace{-0.5em}
\section{Suspiciousness Forest} \label{sec:forest}
\vspace{-0.5em}

In this section, we extend S-tree into S-forest to support multimodal data.
\vspace{-0.5em}
\subsection{$1+ K$ Dataset}
\vspace{-0.5em}

S-forest handles what we call $1+K$ datasets. A $1+K$ dataset $D(X, A_1,..., A_K)$ is a collection of entries each of which has $1+K$ fields. The first field, denoted $X$, is an identifier for that entry, often representing the entity that we are trying to classify. The rest $K$ fields, denoted by $\{ A_1,..., A_K\}$, are attributes or features of this entity. The $1+ K$ formulation is a very common data model applicable to any scenario where a sample can be represented by an identifier and a number of features. For example, in Twitter, a follow action can be represented $1+3$ by $X = follower$ and 3 dimensional features $\{followee, IP \ address, timestamp\}$.
\vspace{-0.5em}
\subsection{Build Suspiciousness Forest}
\vspace{-0.5em}

Given $D(X, A_1,..., A_K)$, for $k \in \{1,...,K\}$, a bipartite graph $G_k$ is formed using values of $X$ and $A_k$. An S-tree  $T_k$ is then built on $G_k$ using same method in Sec.4. Thus we can obtain the S-forest: $\mathbf{T} = \lbrace T_1 , ... T_k, ... T_K \rbrace$.

S-forest is a natural extension of S-tree into multidimensional data. Note that the features $A_k$ can be either entities or resources so the AOBG and ARBG modes we discussed earlier are all special cases of S-forest (with $K = 1$). By looking at $1+K$ data, we can not only combine information and modeling power from both AOBG and ARBG, but also utilize more information from multiple dimensions.

For the purpose of detecting suspicious entities in $X$, we propose the following suspiciousness score acquired from $\mathbf{T}$, which is simply a weighted sum of suspiciousness scores by each individual tree: 

\vspace{-0.5em}
\begin{equation}
S(n) = \sum_{k=1}^K w_k s_k(n),
\end{equation}
where the $s_k(n)$ is the $s(n)$ of computed on $T_k$ by equation \ref{eqn.s-score}. $w_k$ is the weight of $T_k$. $w_k$'s can be automatically regressed when labeled data are available. Or they can be determined empirically. In our experiments, we found the following simple form works well: 
\begin{displaymath}
w_k = \log  q_k,
\end{displaymath}
where $q_k$ is the number of unique values of $A_k$. The intuition behind this choice is as follows. First, since each entry in $D$ corresponds to an edge in $G_k$, all the bipartite graphs have equal number of edges. Unique values of  $A_k$ correspond to target nodes in $G_k$. The larger $q_k$ is, the sparser $G_k$. Let $p_k = 1/q_k$ which can be interpreted as the probability that a random edge lands on any particular target node (i.e., value of $A_k$), assuming target nodes are selected according to uniform distribution. $w_k = \log  q_k = \log \frac{1}{p_k}$ is simply the self information or surprisal of such an event, which we use to represent how unusual or suspicious the graph is.

\para{Analysis.}
S-forest uses an additive model to ensemble multiple graphs built on different features. This property, together with S-tree's capability for detecting dense subgraphs on a single graph, makes S-forest superior in several aspects. We show two examples in the following, which will be validated by experiments in section \ref{exp:sforest}. 

\para{Property 1: Detecting Groups with Overlap.}
We have observed that when there are groups with overlapping nodes, existing approaches such as \cite{FRAUDAR, MZOOM, CROSSSPOT, DCUBE} fail to detect them while S-forest can. The reason is as follows. Fig.\ref{fig:group} represents a typical case of two overlapping groups. Two fraud groups $A$ and $B$ form two dense subgraphs on two graphs respectively. Let $\mathcal{W}$ be the set of common source nodes shared by $A$ and $B$. $\mathcal{W}$ will appear in both the subtrees formed by $A$ and $B$.  Let the S-scores of $A$ and $B$ be $S(A)$ and $S(B)$, respectively, then $S(\mathcal{W}) = S(A)+ S(B)$. The elevation of $S(\mathcal{W})$ does not decrease $S(A)$ or $S(B)$. With properly chosen threshold, all nodes in $A \cup B$ can be detected by S-forest.

Unfortunately, the approaches such as \cite{FRAUDAR, MZOOM, CROSSSPOT, DCUBE} that detect the dense blocks in tensors may not work. Lacking a way to aggregate or select key features, they resort to a greedy method for dense blocks detection. As a result, they can detect the densest block $\mathcal{W}$. However, once $\mathcal{W}$ is detected and removed, the remaining blocks become sparse and are likely to be missed.

\vspace{-0.5em}
\begin{figure}[h]
\includegraphics[width = 0.8\columnwidth,height = 0.3\columnwidth]{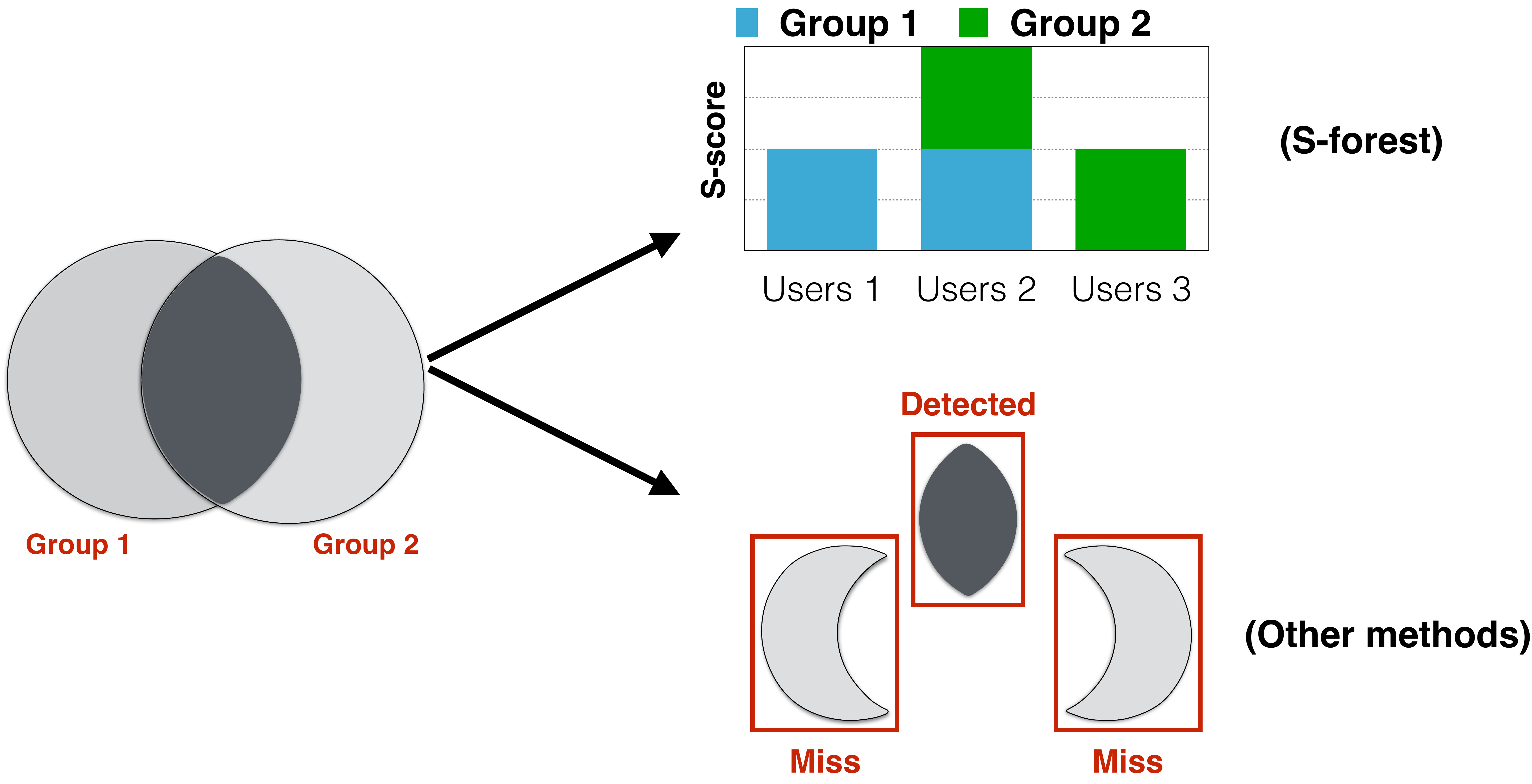}
\centering
\caption{Group overlap} \label{fig:group}
\vspace{-1em}
\end{figure}

\para{Property 2: Automatic Discovery of Critical Features.} 
One of the challenges in group fraud detection is the issue of feature selection. Fraud groups do not show synchrony on \emph{all} features, but only a few key dimensions. However, these key features are not known in advance, and they change with time and fraud groups. A robust detection algorithm must be able to discover such key features automatically. None of the existing approaches such as \cite{MZOOM, CROSSSPOT, DCUBE} performs satisfactorily in this regard. They either rely on hand-crafted feature weighting or treat all features equally, resulting in the inability to adapt to changes or handle exploded feature combinations when the number of dimensions is high.

S-forest, however, greatly mitigates this problem. With S-forest, each feature is examined individually. The final score is aggregated via weighted addition. Information from different features is automatically accumulated without the need for explicit feature selection or combination. As will be shown by experiments in section \ref{exp:sforest}, S-forest performs much more stably on high-dimensional data.

\vspace{-0.5em}
\section{Experiments} 

\vspace{-0.5em}
\begin{table}[h]
	\caption{Datasets used in experiments. `dims' represents `dimensions' }\label{tab:details}
	\vspace{-0.5em}
	\centering
	\begin{tabular}{ccc|ccc}
		\toprule
		      \multicolumn{3}{c}{ S-tree}& \multicolumn{3}{c}{ S-forest}\\ 
		\midrule 
		 datasets & edges & dims & datasets & entries & dims\\
		 \midrule 
		 AmazonOffice\cite{amazon_data} &53K &2 & YelpChi\cite{YELP} &67K & 4    \\
		 AmazonBaby\cite{amazon_data}&160K & 2 & YelpNYC\cite{YELP}& 359K & 4  \\
		 AmazonTools\cite{amazon_data}& 134K & 2 & YelpZip\cite{YELP}& 1.14M & 4  \\
		 AmazonFood\cite{amazon_data} &1.3 M  & 2 & DARPA\cite{DARPA} & 4.55M & 4   \\
		 AmazonVideo\cite{amazon_data} & 583K  & 2& AirFore\cite{AIRFORCE}& 4.89 M  & 8  \\
		 AmazonPet\cite{amazon_data} & 157K & 2 &  Registration  & 26k & 6   \\
		 Twitter\cite{twitter_size} &1.47B &2 & & &    \\
		  \bottomrule
	\end{tabular}
	\vspace{-1em}
\end{table}

We implemented S-tree and S-forest in Python and conducted extensive experiments using both public and proprietary datasets against several strong competing algorithms representing state of the art technology in fraud detection. We use their official open-source implementations for all the competing algorithms. All experiments were run on a server with 2 2.2 GHz Intel Xeon E5 CPUs and 32 GB memory.

Since S-tree and S-forest apply to different datasets, the experiments can be classified into two categories: evaluations of S-tree on 2-dimensional data and S-forest on multi-dimensional data. Table \ref{tab:details} shows the details about the datasets used for each case. We also have to choose different competing algorithms for comparison since they too apply to data with different dimensionalities.
 
\para{Evaluation Metrics.}
We run S-tree and S-forest on a total of 13 datasets. Among them, the Registration is a proprietary dataset from an e-commerce provider's log. The rest are publicly available. The six Amazon datasets and Twitter dataset are unlabeled.    
There are two methods to obtain labels for evaluation. For the Amazon datasets, we perform the standard dense subgraph injection to synthesize fraud groups, which is a commonly-used method in fraud detection research \cite{FRAUDAR,CATCHSYNC}. For the Twitter dataset, we take the same approach as Fraudar \cite{FRAUDAR} and cross-reference known sources to find suspicious accounts. Details will be provided next. Once labels are obtained, we evaluate the performance using standard metric F1-score or AUC.
\vspace{-0.5em}
\subsection{Experiments with S-tree}
\vspace{-0.5em}

We compared S-tree against two algorithms that are applicable to this situation (mining bipartite graph for dense subgraphs) and have achieved the strongest performance so far: Fraudar\cite{FRAUDAR} and CatchSync\cite{CATCHSYNC}. Fraudar adapts the well-known algorithm \cite{Greedy} for the \emph{densest subgraph problem} to the weighted bipartite graph situation, which has been shown to be effective in fraud group detection. However, Fraudar, tries to find a subgraph maximizing the density metric $F = \frac{|\mathcal{M}| \cdot |\mathcal{N}|}{|\mathcal{M}| + |\mathcal{N}|}$. When the sizes of $\mathcal{M}$ and $\mathcal{N}$ are unbalanced (e.g., when $|\mathcal{N}| \gg |\mathcal{M}|$, as is common in many fraud groups), Fraudar performs poorly, since in this case $F \rightarrow  |\mathcal{M}|$ and it is possible that subgraphs induced by legit nodes have bigger values in terms of $F$.  
  
CatchSync detects dense subgraphs leveraging node degree and HITS\cite{HITS}. However, it is vulnerable to camouflaging: both A-Cam and P-Can can change the connectivity between fraud nodes and legit nodes, altering their degrees and HITS's dramatically. 


\para{[Amazon Datasets]\cite{amazon_data}.} We first run the algorithms on six collections of reviews for different types of commodities on Amazon (Table \ref{tab:details}). We designed two subgraph injection schemes: the first is to 
examine in detail the performances for detecting HIM bicliques ($\rho = 1$) and dense subgraphs ($\rho < 1$); the second is for more general performance evaluation.

\para{[Injection Scheme 1].} To simulate the attack models of fraudsters, we use the same approach as \cite{FRAUDAR,CATCHSYNC} and generate datasets by injecting a fraud group with varying configurations into AmazonOffice. The injected fraud group is set as $\mathcal{G}(|\mathcal{N}| = 200, |\mathcal{M}| = \lambda, \rho)$ where $\lambda$ denotes the number of  fraud target nodes and $\rho$ follows Definition \ref{def.rho}. We also introduce two perturbations: (1) unbalanced dense subgraph created by varying  $\lambda$;  and (2) randomly generated A-Cam or P-Cam.  Let $\theta$ denote the number of camouflage edges for each node in $\mathcal{N}$ or $\mathcal{M}$. We set $\theta =\lambda$ or $ \theta =0.5\lambda$ in the experiments.
We set $\rho =1.0$ and $\rho = 0.6$ to represent the different cases of detecting HIM bicliques and dense subgraphs, respectively. 

\subsubsection{Detection of HIM Bicliques ($\rho$ = 1.0) } Fig.\ref{fig:synt1} shows the performance of each algorithm on detecting HIM bicliques. The x-axises are $\lambda$ (varying from 0 to 25), and y-axises are the best-F1 scores. A few conclusions are clear: (1) Without camouflage, both S-tree and CatchSync attain perfect accuracy, even when $\lambda$ is reduced to 2. This is not surprising. Since isolated-unbalanced bicliques must form maximal-shared paths, they can be detected accurately by S-tree. At the same time, since HIM Bicliques'  HITS and degree values are very prominent, CatchSync also obtains the same performance. On the other hand, Fraudar's inability for handling unbalanced dense subgraph is clearly demonstrated by the experiments. (2) Both S-tree and Fraudar show resistance in face of camouflage. In contrast, camouflage destroys CatchSync's effectiveness, as HITs score and degree of fraud source node change dramatically as the number of camouflage edges increases. 

 
\begin{figure}[h]
\vspace{-1em}
\includegraphics[width = 0.8\columnwidth, height = 0.5\columnwidth]{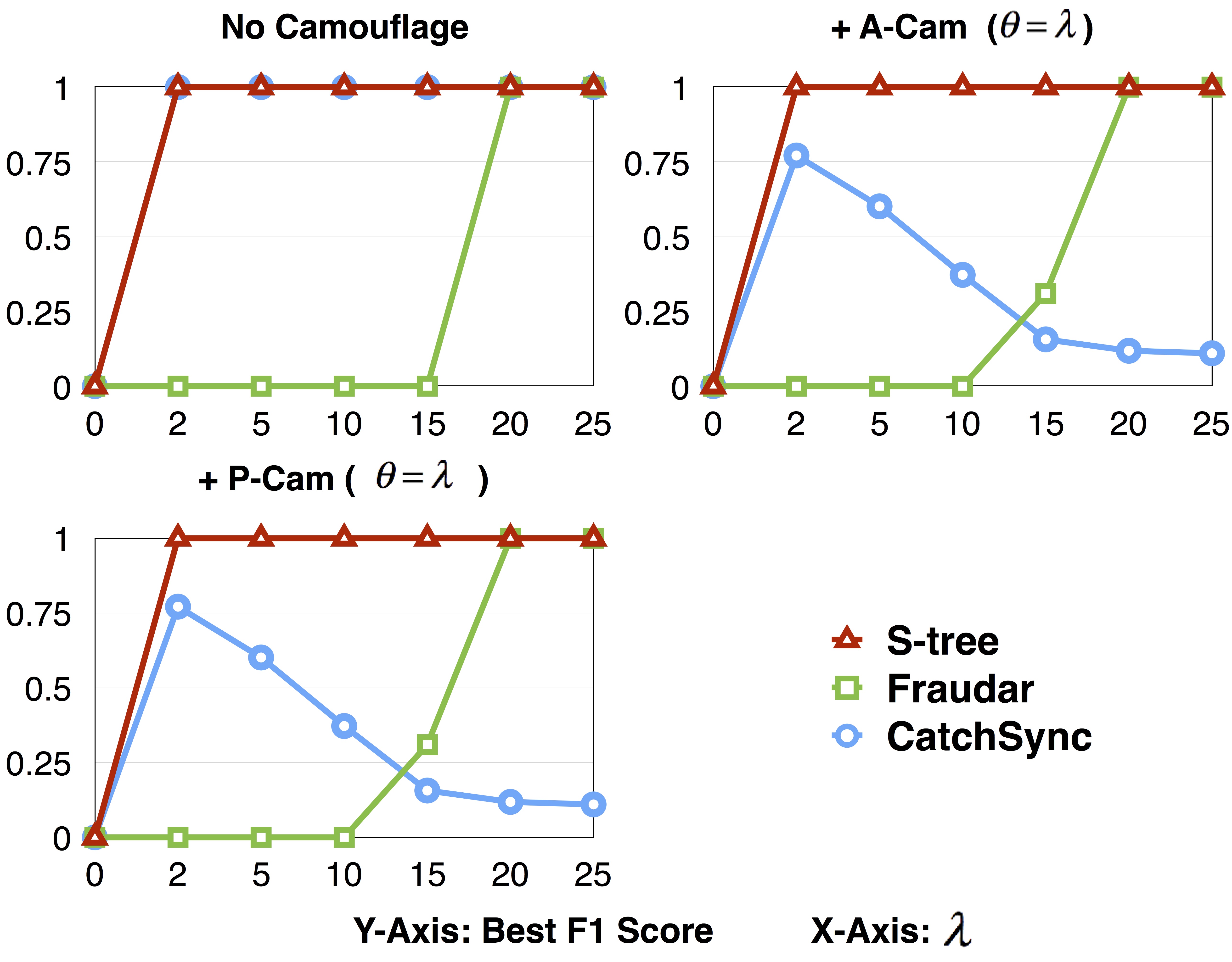}
\centering
\caption{Performance on AmazonOffice ($\rho = 1.0$)} \label{fig:synt1}
\vspace{-1em}
\end{figure}
  
\subsubsection{Detection of Dense Subgraphs: $\rho$ = 0.6} 
When $\rho$ is reduced to 0.6, the performance of all algorithms decrease, as shown in Fig.\ref{fig:synt2}. Not surprisingly, S-tree still outperforms Catchsync and Fraudar in face of camouflage, especially for dense subgraphs with large sizes. Let us take a closer look at a concrete data point: camouflage $\theta = \lambda = 20$. In this case, the average degree of a fraud group is only 11.32, which is lower than a group formed by normal users. This caused a large number of false positives for Fraudar. The number of camouflage edges of each user reaches 20 so that its degree and HITs score have no difference with normal ones, which accounts for the deterioration of CatchSync's performance. For S-tree, however, 20 baskets of fraud target nodes are mapped into two shared subtrees in S-tree, resulting in anomalously high suspiciousness scores and almost perfect accuracy. 

\vspace{-1em}
\begin{figure}[h]
\includegraphics[width = 0.8\columnwidth, height = 0.7\columnwidth]{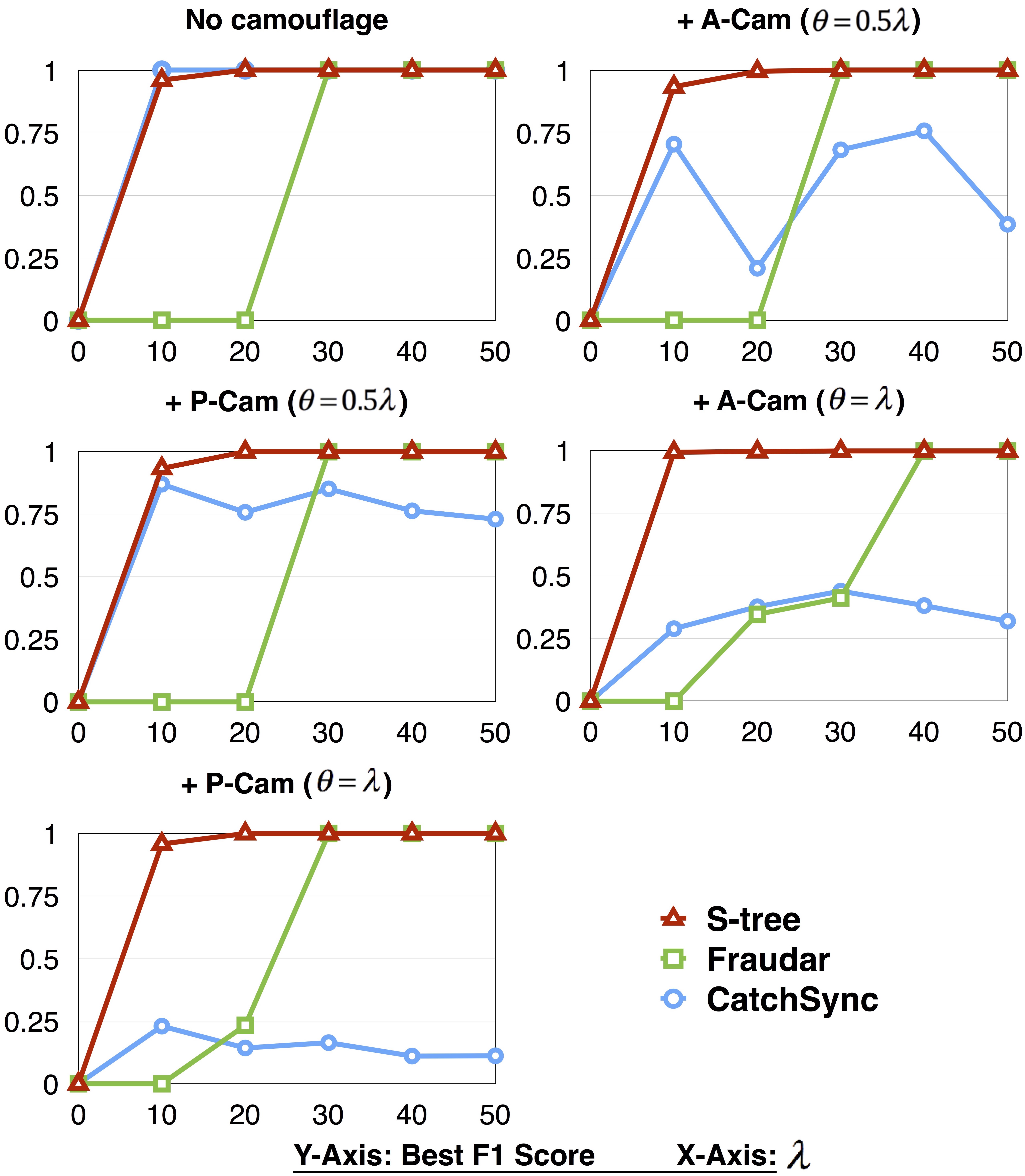}
\centering
\caption{Performance on AmazonOffice ($\rho = 0.6$)} \label{fig:synt2}
\vspace{-1em}
\end{figure}

\para{[Injection Scheme 2].} In this experiment, we inject 10 fraud groups $\mathcal{G}( \mathcal{N}| = 200, \mathcal{M} = \lambda, \rho)$ into AmazonOffice, AmazonBaby,  AmazonTools, AmazonFood, AmazonVideo, and AmazonBook. $\lambda$ and $\rho$ are randomly chosen from $[5, 50]$ and $[0.6, 1.0]$, respectively. Out of the 10 fraud groups, 3 of them are augmented with A-Cam, and another 3 augmented with P-Cam. In all the camouflage cases we set $\theta = \lambda$. The results are shown in Table \ref{tab:amazon}. Overall, S-tree is the most robust and performs best across all variations and camouflages. 
\vspace{-0.5em}
\begin{table}[h]
	\caption{Performance(AUC) on the [Amazon] datasets }\label{tab:amazon}
	\centering
	\begin{tabular}{c|cccc}
		\toprule
		& AmazonOffice & AmazonBaby& AmazonTools \\
		\midrule  
		Fraudar &0.9415 &0.8674 & 0.9264  \\
		 CatchSync & 0.8412 &0.8190& 0.7907  \\
		\midrule  
		S-tree & \textbf{0.9987} & \textbf{0.9595} & \textbf{0.9389}\\
		\toprule
		&AmazonFood &AmazonVideo & AmazonBook \\
		\midrule  
		Fraudar & 0.6915 & 0.7361 & 0.8923  \\
		 CatchSync & 0.7412 &0.6990& 0.7634  \\
		\midrule  
		S-tree & \textbf{0.8558} & \textbf{0.8756} & \textbf{0.9435}\\
		\bottomrule
	\end{tabular}
	\vspace{-1em}
\end{table}

\para{[Twitter Dataset\cite{twitter_size}].} To demonstrate the effectiveness of S-tree on a very large social graph, we run it on the Twitter Dataset \cite{twitter_size}, a graph containing 41.7 million users and 1.47 billion follows over July 2009. The dataset does not contain any labels.  

First of all, the algorithm is very fast. It took only ten hours to finish on a single server with two CPUs.

%

S-tree detects a suspicious group with 5531 followers. To further validate the group, we randomly sample 150 followers for hand labeling to determine how many of them appear fraudulent.
To do this, we label which users are fraudulent based on the following characteristics:
first, we check how many accounts in this group are deleted or suspended (penalized by Twitter)~\footnote{https://tweeterid.com/}; second, we inspect Tweets of these accounts that whether contain the URLs of two known follower-buying services, TweepMw, and TweeterGetter, which shows if they had involved in fraudulent activities. We sample 200 followers and investigate of their profiles, 43.2\% of detected accounts(followers) have tweets advertising TweepMe or Tweetergetter, and additional 34.6\% are deleted, protected or suspended. In comparison, we choose two non-sharing subtrees of S-tree using the same investigation, we find that none of the accounts has tweets advertising and only a few accounts are deleted. Thus, the results further support the effectiveness of S-tree.



\vspace{-0.5em}
\subsection{Experiments with S-forest} \label{exp:sforest}
\vspace{-0.5em}

For multi-dimensional data, we compare S-forest against four strong rivals. CrossSpot\cite{CROSSSPOT}, M-Zoom\cite{MZOOM}, M-Biz\cite{MBIZ}  and D-Cube \cite{DCUBE} all detect dense blocks in tensors. For CrossSpot, we set the random initialization seeds as 400, 500 and 600. For M-Zoom and D-Cube, we use three metrics they supported in \cite{MZOOM,DCUBE} and choose the one with the best accuracy. We evaluate each methods' performance in terms of ranking of suspiciousness of source nodes. Experiments were conducted on a number of datasets described below.

\para{[DARPA]} DARPA\cite{DARPA} was collected by the Cyber Systems and Technology Group in 1998. It is a collection of network connections, some of which belong to TCP attacks. Each connection contains \emph{source IP, target IP, timestamp.} Thus the dataset can be model as $D$(connection, source IP, target IP, timestamp). The dataset includes labels indicating whether each connection is malicious or not. We run CrossSpot, M-zoom, and D-cube on the tensor formed by $D$. Table \ref{tab:darpa} gives the performance of each method for detecting malicious connections. S-forest evidently surpasses the best previous accuracy created by D-Cube.

\para{[AirForce]} AirForce \cite{AIRFORCE} is a dataset used for KDD Cup 1999, which has been used in M-Zoom and D-Cube.  It includes a wide variety of intrusions simulated in a military network environment. However, it does not contain any specific IP address.  Following D-Cube\cite{DCUBE}, we also use the same seven features and model it as an $1+K$ dataset: $D$(connection, protocol, service, src bytes, dst bytes, flag, host count, srv count) \footnote{Please see http://www.cs.cmu.edu/\~{}kijungs/codes/dcube/supple.pdf for detailed description of the fields.}. The dataset includes labels indicating whether each connection is malicious or not. Table \ref{tab:darpa} presents the result of each method. S-forest achieves the best accuracy.
\vspace{-0.5em}
\begin{table}[h]
	\caption{Performance(AUC) on the [DARPA] and [AirForce] dataset}\label{tab:darpa}
	\centering
	\begin{tabular}{c|cc}
		\toprule
		Dataset & DARPA & AirForce\\
		\midrule  
		 CrossSpot &0.923 & 0.924 \\
		 M-Zoom & 0.923 & 0.975 \\
		 M-Biz &0.923 & 0.975 \\
		 D-Cube & 0.930 & 0.987 \\
		\midrule  
		S-forest & \textbf{0.942} & \textbf{0.991}\\
		\bottomrule
	\end{tabular}
	\vspace{-1em}
\end{table}

\para{[Yelp]}\cite{YELP}. YelpChi, YelpNYC, and YelpZip are three datasets collected by \cite{YELP2} and \cite{YELP} containing review actions for restaurants on Yelp. They all can be modeled as $D$(restaurant, user, rating-star, timestamp). In these datasets, our task is to detect fraudulent restaurants that purchase fake reviews.  The three datasets all include labels indicating whether each review is fake or not. A restaurant is labeled as ``fraudulent'' if the number of fake reviews it receives exceeds 40. Table \ref{tab:yelp} shows the results.
\vspace{-0.5em}
\begin{table}[h]
	\caption{Performance(AUC) on the [YELP] datasets}\label{tab:yelp}
	\centering
	\begin{tabular}{l|ccc}
		\toprule
		&YelpChi & YelpNYC & YelpZip\\
		\midrule  
		CrossSpot & 0.9905  & 0.9328 & 0.9422 \\
		M-Zoom &0.9850 & 0.9255 & 0.9391 \\
		M-Biz & 9850 & 0.9229 &  0.9302\\
	    Dcube & 0.9857 & 0.9232  & 0.9391\\
		\midrule  
		S-forest & \textbf{0.9945}& \textbf{0.9406}& \textbf{0.9456} \\
		\bottomrule
	\end{tabular}
	\vspace{-1em}
\end{table}

\para{[Registration Dataset]} 
To test S-forest's performance on real-world fraud detection task, we obtained real registration data from a major e-commerce provider. The data contains 26k registration entries sampled from the vendor's production system during a three-day period. Each entry contains an account id and multiple features such as IP subset, phone, and osType. 
Some of the accounts were registered by bots and were later sold to fraudsters who then conducted various frauds using them.  These sampled accounts were tracked for several months and were labeled fraudulent if they were found to engage in frauds by human experts. A total of 10k fraud accounts were identified.
 
To verify the automatic key feature discovery capability of S-forest, we did very little feature engineering and retained most information from the log. We formulated data as $1+5$ model: $D$(account, phone-prefix, IP-prefix, osType, phone-prefix-3, is-same-province), where is-same-province is a Boolean value indicating if the IP and phone of a user come from a same province or not. Among the five features, phone and IP are two critical features where only malicious accounts exhibit strong resource sharing pattern, while osType, phone-3, and is-same are three noise features where both legit and malicious accounts form dense subgraphs. We conducted five experiments, each with different feature sets. Results are shown in Table \ref{tab:real}. 
  
Clearly, S-forest outperforms other baselines by large margins across all configurations. Specifically, when the noise features are added,  the performance of CrossSpot, M-Zoom, M-Biz, and D-Cube deteriorates while S-forest stays stable. This is due to the fact that these algorithms did not differentiate the importances of different features in identifying frauds and were misled by clustering on some irrelevant features. S-forest, on the other hand, is capable of automatically focusing on critical features while ignoring noisy perturbations.   

The closer investigation uncovers more detailed differences. First, we found that the sizes of fraud groups vary widely, ranging from tens to thousands. Other algorithms miss some groups with small sizes, since they appear to be  low-density to them. One particular fraud group consists of 75 accounts that use the same IP subnet. The group forms an extremely unbalanced dense subgraph since the number of target nodes (IP subnet) is only 1. The other algorithms all fail to detect it.

\vspace{-0.5em}
\begin{table}[ht] 
\caption{Performance (AUC) on [Registration] dataset. `C' represents `critical feature' and `N' represents `noise feature'.}\label{tab:real}
\centering
\begin{tabular}{c|c|c|c|c|c}
\toprule
 &  1 C  & 2 C & 2C +1N  & 2C + 2N & 2C+3N \\
 \midrule
CrossSpot  & 0.7471 & 0.7312 &0.7307  &0.7539  & 0.8099\\
M-zoom  & 0.7544  &0.8880 &0.8409 &0.8754  & 0.6802\\ 
M-Biz &0.7577 & 0.8842 &0.8409 & 0.8674& 0.6812\\
D-Cube &0.7601 &0.9201& 0.8532 &0.8723 & 0.7102\\
 \midrule
S-forest &\textbf{0.7678}  &\textbf{0.9961} &\textbf{0.9964} & \textbf{0.9876} &\textbf{0.9976}\\
\bottomrule
\end{tabular}
\vspace{-0.5em}
\end{table}

Second, another example confirms our earlier analysis that other algorithms have difficulty handing overlapping groups. We found in the data that there is one fraud group $A$ formed on IP subnet and another group $B$ formed on phone-prefix-7. There are some overlapping users who share both two features. S-forest correctly detected both $A$ and $B$ while other methods missed $A \setminus A \cap B$.


\vspace{-0.5em}

\subsection{Scalability and Efficiency}
\vspace{-0.5em}

\vspace{-0.5em}
\begin{figure}[h]
\includegraphics[width = 0.8\columnwidth, height = 0.5\columnwidth]{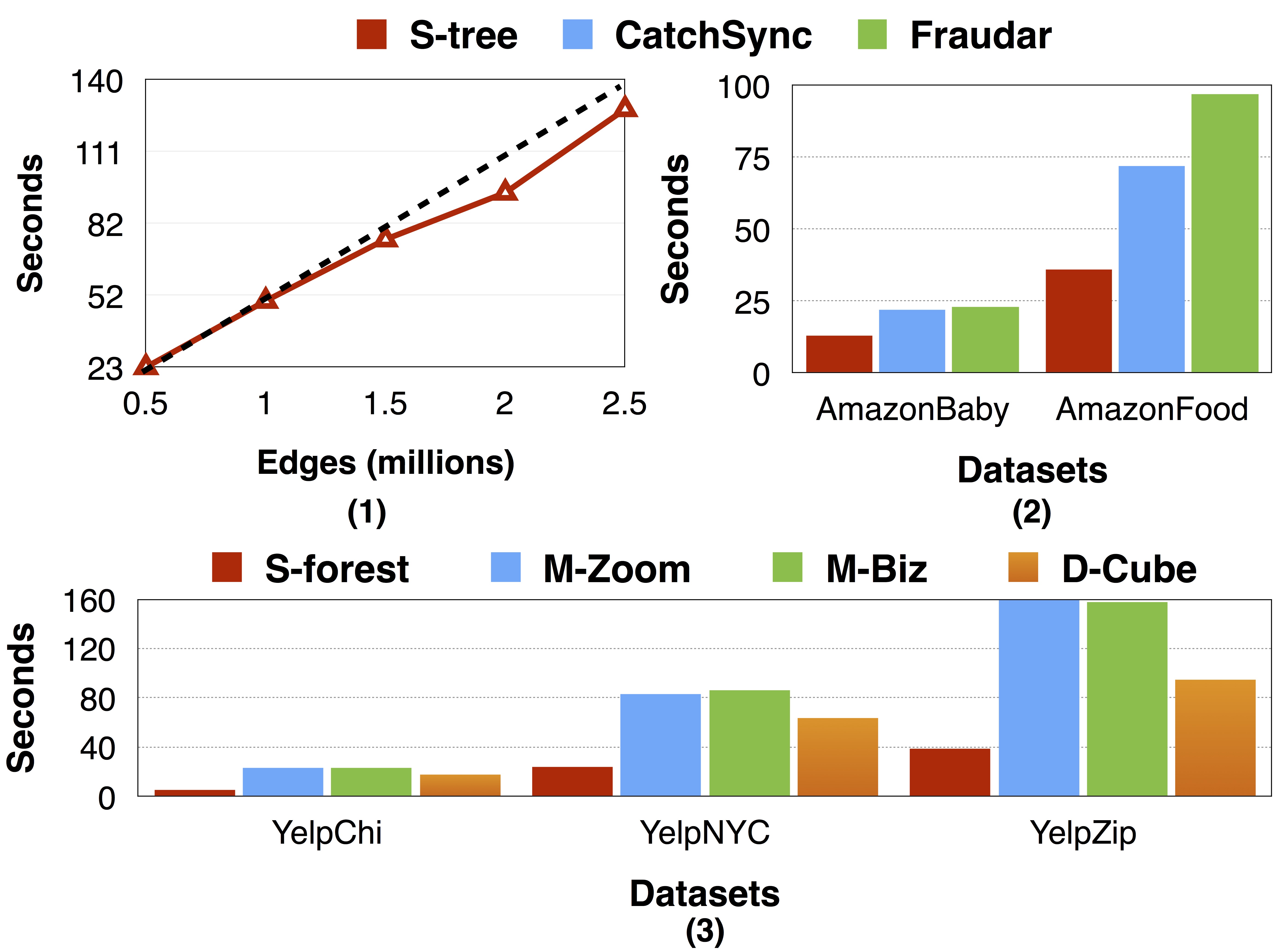}
\centering
\caption{S-tree/S-forest has linear-time complexity and is faster than other state-of-the-art methods} \label{fig:time}
\vspace{-1em}
\end{figure}

As mentioned before, building S-tree takes $O(2|E|)$ time and all algorithms only need one scan of S-tree. To verify this linear complexity, we recorded S-tree's running time on AmazonApp data \cite{amazon_data}. We vary the scale by subsampling the edges. Fig.\ref{fig:time} (1) shows the linear scaling of S-tree's running time in the number of edges. In addition, for comparison, we show the running time of S-tree, CatchSync and Fraudar on AmazonBaby and AmazonFood datasets in Fig.\ref{fig:time}(2). S-tree is faster than the two, especially when the number of edges is large. In fact, Fraudar needs $O(|E|log|N|)$ time for detecting a single fraud group (and there typically exist multiple fraud groups in a dataset) and CatchSync needs to calculate HITS score. Both are slow for large-scale data. (3) S-forest also is linear scaling with entries of datasets, and is much faster than other state-of-the-art methods.

\section{Related Work}

\para{Algorithms for maximal biclique.} Enumerating all maximal bicliques cannot be solved in polynomial time \cite{eppstein1994arboricity} and people often place restrictions on it to find maximal bicliques efficiently.  Algorithms for finding maximal bicliques in a bipartite graph can be classified into three lines. The first is exhaustively search-based approaches \cite{sanderson2003obtaining, mushlin2007graph}: build all subsets of one vertex partition, find their intersections in the other partition, and check each for maximality. However, they must generally place restrictions on the problem to stem the enormous search space. The second line relies on graph inflation \cite{makino2004new}: transform the problem into finding all maximal cliques in a graph by adding possible edges. The third line is based on association mining.  \cite{li2007maximal} and \cite{zhang2014finding} find all maximal bicliques respectively based on \cite{minging5} and \cite{bron1973algorithm}. Certainly, their computational cost is huge. Moreover, biclique can be transformed into frequent itemsets in transactional databases. Thus, frequent itemset mining methods \cite{minging1, minging2,minging3,minging5} may be helpful in finding maximal bicliques. However, they cannot solve maximal bicliques or MHI bicliques enumeration problem in polynomial time. 
Various restrictions on either inputs or outputs are proposed to find maximal bicliques, including bounding the maximum input degree\cite{tanay2002discovering}, bounding an input’s arboricity\cite{eppstein1994arboricity}, bounding the minimum biclique size\cite{sanderson2003obtaining}, and figure-of-merit\cite{mushlin2007graph}. Naturally, the algorithms relying on these restrictions do not account for MHIBP.

\para{Mining dense subgraphs for fraud detection.}
Mining dense subgraphs \cite{FRAUDAR, KCORE, SPOKEN, NETPROBE, FRAUDEAGLE} is effective for detecting the fraud groups of users and objects connected by a massive number of edges. Fraudar \cite{FRAUDAR} tries to find a subgraph with the maximal average degree using a greedy algorithm. ~\cite{INFERRING,FBOX} adapt singular-value decomposition (SVD) to capture abnormal dense user blocks. Furthermore, Belief propagation \cite{FRAUDEAGLE, NETPROBE}, HITS-like ideas \cite{COMBATING, CATCHSYNC,UNDERSTAN} are all adapted to detect high-density signals of fraud groups in a graph. FraudEagle \cite{FRAUDEAGLE} uses the Loopy Belief Propagation to assign labels to nodes in the network represented by Markov Random Field. 

\para{Detecting high-density signal in tensors.} Multimodal data can be treated as tensors and many works mine the tensors directly for dense blocks. \cite{SYNCHROTRAP, COPYCATCH} discover time constraint of fraudsters and provide a way to combine multiple features to spot fraud. \cite{MAF} spots dense blocks using CP decomposition\cite{CP}. However, as observed in \cite{CROSSSPOT}, using tensor decomposition techniques usually find blocks with low density, and are outperformed by search-based methods.
CrossSpot \cite{CROSSSPOT} starts the search from random seeds and greedily add values into the block until it reaches the local optimum. M-Biz \cite{MBIZ} utilizes similar methods but adjust the block by adding or deleting values from it. M-Zoom\cite{MZOOM} and D-Cube\cite{DCUBE} are both find the densest block by greedily deleting values from the tensor until it reaches the maximal value in terms of a density metric.

\section{Conclusion}

In this paper, we introduced and addressed a new, restricted biclique problem (MHIBP), motivated by real malicious pattern in fraud campaign. In addition to the MHIBP solver, we proposed a practical algorithm that focuses on near bicliques, which is applicable to catching fraud groups in a wide range of situations. Our algorithms are based on two novel structures S-tree, and its extension, S-forest. The problem and algorithms may have interesting applications in other areas such as bioinformatics and social network analysis which we plan to pursue as future work.

\bibliographystyle{IEEEtran}
\bibliography{reference}

\begin{thebibliography}{10}
\providecommand{\url}[1]{#1}
\csname url@samestyle\endcsname
\providecommand{\newblock}{\relax}
\providecommand{\bibinfo}[2]{#2}
\providecommand{\BIBentrySTDinterwordspacing}{\spaceskip=0pt\relax}
\providecommand{\BIBentryALTinterwordstretchfactor}{4}
\providecommand{\BIBentryALTinterwordspacing}{\spaceskip=\fontdimen2\font plus
\BIBentryALTinterwordstretchfactor\fontdimen3\font minus
  \fontdimen4\font\relax}
\providecommand{\BIBforeignlanguage}[2]{{%
\expandafter\ifx\csname l@#1\endcsname\relax
\typeout{** WARNING: IEEEtran.bst: No hyphenation pattern has been}%
\typeout{** loaded for the language `#1'. Using the pattern for}%
\typeout{** the default language instead.}%
\else
\language=\csname l@#1\endcsname
\fi
#2}}
\providecommand{\BIBdecl}{\relax}
\BIBdecl

\bibitem{Flake2000Efficient}
G.~W. Flake, S.~Lawrence, and C.~L. Giles, ``Efficient identification of web
  communities,'' in \emph{ACM SIGKDD International Conference on Knowledge
  Discovery and Data Mining}, 2000, pp. 150--160.

\bibitem{MZOOM}
K.~Shin, B.~Hooi, and C.~Faloutsos, ``M-zoom: Fast dense-block detection in
  tensors with quality guarantees,'' in \emph{ECML PKDD}, 2016, pp. 264--280.

\bibitem{GENE}
B.~Saha, A.~Hoch, S.~Khuller, L.~Raschid, and X.~N. Zhang, ``Dense subgraphs
  with restrictions and applications to gene annotation graphs,'' in
  \emph{International Conference on Research in Computational Molecular
  Biology}, 2010, pp. 456--472.

\bibitem{FRAUDEAGLE}
L.~Akoglu, R.~Chandy, and C.~Faloutsos, ``Opinion fraud detection in online
  reviews by network effects.'' in \emph{ICWSM}.\hskip 1em plus 0.5em minus
  0.4em\relax The AAAI Press, 2013.

\bibitem{NETPROBE}
S.~Pandit, D.~H. Chau, S.~Wang, and C.~Faloutsos, ``Netprobe:a fast and
  scalable system for fraud detection in online auction networks,'' in
  \emph{WWW}, 2007, pp. 201--210.

\bibitem{HITS}
J.~M. Kleinberg, ``Authoritative sources in a hyperlinked environment,''
  \emph{Journal of the Acm}, vol.~46, no.~5, pp. 604--632, 1999.

\bibitem{COMBATING}
H.~Garcia-Molina and J.~Pedersen, ``Combating web spam with trustrank,'' in
  \emph{VLDB}, 2004, pp. 576--587.

\bibitem{CATCHSYNC}
M.~Jiang, P.~Cui, A.~Beutel, C.~Faloutsos, and S.~Yang, ``Catchsync: catching
  synchronized behavior in large directed graphs,'' in \emph{ACM SIGKDD}, 2014,
  pp. 941--950.

\bibitem{SPOKEN}
B.~A. Prakash, A.~Sridharan, M.~Seshadri, S.~Machiraju, and C.~Faloutsos,
  ``Eigenspokes: Surprising patterns and scalable community chipping in large
  graphs,'' in \emph{PAKDD}.\hskip 1em plus 0.5em minus 0.4em\relax Springer,
  2010, pp. 435--448.

\bibitem{FBOX}
N.~Shah, A.~Beutel, B.~Gallagher, and C.~Faloutsos, ``Spotting suspicious link
  behavior with fbox: An adversarial perspective,'' in \emph{ICDM}, 2014, pp.
  959--964.

\bibitem{FRAUDAR}
B.~Hooi, H.~A. Song, A.~Beutel, N.~Shah, K.~Shin, and C.~Faloutsos, ``Fraudar:
  Bounding graph fraud in the face of camouflage,'' in \emph{ACM SIGKDD}, 2016,
  pp. 895--904.

\bibitem{CROSSSPOT}
M.~Jiang, A.~Beutel, P.~Cui, B.~Hooi, S.~Yang, and C.~Faloutsos, ``Spotting
  suspicious behaviors in multimodal data: A general metric and algorithms,''
  \emph{IEEE TKDE}, vol.~28, no.~8, pp. 2187--2200, 2016.

\bibitem{DCUBE}
K.~Shin, B.~Hooi, J.~Kim, and C.~Faloutsos, ``D-cube: Dense-block detection in
  terabyte-scale tensors,'' in \emph{ACM WSDM}, New York, NY, USA, 2017, pp.
  681--689.

\bibitem{SYNCHROTRAP}
Q.~Cao, C.~Palow, C.~Palow, and C.~Palow, ``Uncovering large groups of active
  malicious accounts in online social networks,'' in \emph{ACM CCS}, 2014, pp.
  477--488.

\bibitem{COPYCATCH}
A.~Beutel, W.~Xu, V.~Guruswami, C.~Palow, and C.~Faloutsos, ``Copycatch:
  stopping group attacks by spotting lockstep behavior in social networks,'' in
  \emph{WWW}.\hskip 1em plus 0.5em minus 0.4em\relax ACM, 2013, pp. 119--130.

\bibitem{book1980michael}
R.~V. Book \emph{et~al.}, ``Michael r. garey and david s. johnson, computers
  and intractability: A guide to the theory of $ np $-completeness,''
  \emph{Bulletin (New Series) of the American Mathematical Society}, vol.~3,
  no.~2, pp. 898--904, 1980.

\bibitem{biclique}
R.~Peeters, ``The maximum edge biclique problem is np-complete,''
  \emph{Discrete Applied Mathematics}, vol. 131, no.~3, pp. 651--654, 2000.

\bibitem{eppstein1994arboricity}
D.~Eppstein, ``Arboricity and bipartite subgraph listing algorithms,''
  \emph{Inf. Process. Lett.}, vol.~51, no.~4, pp. 207--211, 1994.

\bibitem{sanderson2003obtaining}
M.~J. Sanderson, A.~C. Driskell, R.~H. Ree, O.~Eulenstein, and S.~Langley,
  ``Obtaining maximal concatenated phylogenetic data sets from large sequence
  databases,'' \emph{Molecular biology and evolution}, vol.~20, no.~7, pp.
  1036--1042, 2003.

\bibitem{mushlin2007graph}
R.~A. Mushlin, A.~Kershenbaum, S.~T. Gallagher, and T.~R. Rebbeck, ``A
  graph-theoretical approach for pattern discovery in epidemiological
  research,'' \emph{IBM systems journal}, vol.~46, no.~1, pp. 135--149.

\bibitem{FPTREE}
J.~Han, J.~Pei, and Y.~Yin, ``Mining frequent patterns without candidate
  generation,'' in \emph{ACM SIGMOD}, 2000, pp. 1--12.

\bibitem{INFORMATION}
C.~E. Shannon, ``A mathematical theory of communication,'' \emph{Bell Labs
  Technical Journal}, vol.~27, no.~3, pp. 379--423, 2014.

\bibitem{amazon_data}
M.~Julian, ``{Amazon product data},''
  \url{http://jmcauley.ucsd.edu/data/amazon/}.

\bibitem{YELP}
S.~Rayana and L.~Akoglu, ``Collective opinion spam detection: Bridging review
  networks and metadata,'' in \emph{ACM SIGKDD}, 2015.

\bibitem{DARPA}
R.~P. Lippmann, D.~J. Fried, I.~Graf, J.~W. Haines, K.~R. Kendall, D.~McClung,
  D.~Weber, S.~E. Webster, D.~Wyschogrod, R.~K. Cunningham, and M.~A. Zissman,
  ``Evaluating intrusion detection systems: the 1998 darpa off-line intrusion
  detection evaluation,'' in \emph{Proceedings DARPA Information Survivability
  Conference and Exposition. DISCEX'00}, vol.~2, Jan 2000, pp. 12--26 vol.2.

\bibitem{AIRFORCE}
``Kdd cup 1999 data,'' \url{http://kdd.ics.uci.edu/databases/
  kddcup99/kddcup99.html.}

\bibitem{twitter_size}
K.~Haewoon, L.~Changhyun, P.~Hosung, and M.~Sue, ``What is twitter, a social
  network or a news media?'' \url{http://an.kaist.ac.kr/traces/WWW2010.html },
  2010.

\bibitem{Greedy}
M.~Charikar, \emph{Greedy Approximation Algorithms for Finding Dense Components
  in a Graph}.\hskip 1em plus 0.5em minus 0.4em\relax Springer Berlin
  Heidelberg, 2000.

\bibitem{MBIZ}
K.~Shin, B.~Hooi, and C.~Faloutsos, ``Fast, accurate, and flexible algorithms
  for dense subtensor mining,'' \emph{Acm Transactions on Knowledge Discovery
  from Data}, vol.~12, no.~3, pp. 1--30, 2018.

\bibitem{YELP2}
A.~Mukherjee, V.~Venkataraman, B.~Liu, and N.~Glance,
  ``\BIBforeignlanguage{English}{What yelp fake review filter might be
  doing?}'' in \emph{\BIBforeignlanguage{English}{Proceedings of the 7th
  International Conference on Weblogs and Social Media, ICWSM 2013}}.\hskip 1em
  plus 0.5em minus 0.4em\relax AAAI press, 1 2013, pp. 409--418.

\bibitem{makino2004new}
K.~Makino and T.~Uno, ``New algorithms for enumerating all maximal cliques,''
  in \emph{Scandinavian Workshop on Algorithm Theory}.\hskip 1em plus 0.5em
  minus 0.4em\relax Springer, 2004, pp. 260--272.

\bibitem{li2007maximal}
J.~Li, G.~Liu, H.~Li, and L.~Wong, ``Maximal biclique subgraphs and closed
  pattern pairs of the adjacency matrix: A one-to-one correspondence and mining
  algorithms,'' \emph{IEEE Trans. Knowl. Data Eng.}, vol.~19, no.~12, pp.
  1625--1637, 2007.

\bibitem{zhang2014finding}
Y.~Zhang, C.~A. Phillips, G.~L. Rogers, E.~J. Baker, E.~J. Chesler, and M.~A.
  Langston, ``On finding bicliques in bipartite graphs: a novel algorithm and
  its application to the integration of diverse biological data types,''
  \emph{BMC bioinformatics}, vol.~15, no.~1, p. 110, 2014.

\bibitem{minging5}
T.~Uno, M.~Kiyomi, and H.~Arimura, ``Lcm ver. 2: Efficient mining algorithms
  for frequent/closed/maximal itemsets,'' in \emph{Fimi}, vol. 126, 2004.

\bibitem{bron1973algorithm}
C.~Bron and J.~Kerbosch, ``Algorithm 457: finding all cliques of an undirected
  graph,'' \emph{Communications of the ACM}, vol.~16, no.~9, pp. 575--577,
  1973.

\bibitem{minging1}
M.~J. Zaki and C.-J. Hsiao, ``Charm: An efficient algorithm for closed itemset
  mining,'' in \emph{Proceedings of the 2002 SDM}.\hskip 1em plus 0.5em minus
  0.4em\relax SIAM, 2002, pp. 457--473.

\bibitem{minging2}
J.~Wang, J.~Han, and J.~Pei, ``Closet+: Searching for the best strategies for
  mining frequent closed itemsets,'' in \emph{Proceedings of the ninth ACM
  SIGKDD}.\hskip 1em plus 0.5em minus 0.4em\relax ACM, 2003, pp. 236--245.

\bibitem{minging3}
G.~Grahne and J.~Zhu, ``Efficiently using prefix-trees in mining frequent
  itemsets.'' in \emph{FIMI}, vol.~90, 2003.

\bibitem{tanay2002discovering}
A.~Tanay, R.~Sharan, and R.~Shamir, ``Discovering statistically significant
  biclusters in gene expression data,'' \emph{Bioinformatics}, vol.~18, no.
  suppl\_1, pp. S136--S144, 2002.

\bibitem{KCORE}
K.~Shin, T.~Eliassi-Rad, and C.~Faloutsos, ``Patterns and anomalies in k -cores
  of real-world graphs with applications,'' \emph{Knowledge \& Information
  Systems}, vol.~54, no.~3, pp. 677--710, 2018.

\bibitem{INFERRING}
M.~Jiang, P.~Cui, A.~Beutel, C.~Faloutsos, and S.~Yang, ``Inferring lockstep
  behavior from connectivity pattern in large graphs,'' \emph{Knowledge \&
  Information Systems}, vol.~48, no.~2, pp. 399--428, 2016.

\bibitem{UNDERSTAN}
S.~Ghosh, B.~Viswanath, F.~Kooti, N.~K. Sharma, G.~Korlam, F.~Benevenuto,
  N.~Ganguly, and K.~P. Gummadi, ``Understanding and combating link farming in
  the twitter social network,'' in \emph{WWW}, 2012, pp. 61--70.

\bibitem{MAF}
K.~Maruhashi, F.~Guo, and C.~Faloutsos, ``Multiaspectforensics: Pattern mining
  on large-scale heterogeneous networks with tensor analysis,'' in
  \emph{ASONAM}, 2011, pp. 203--210.

\bibitem{CP}
T.~G. Kolda and B.~W. Bader, ``Tensor decompositions and applications,''
  \emph{SIAM review}, vol.~51, no.~3, pp. 455--500, 2009.

\end{thebibliography}
\end{document}